\newcommand{\scp}{\hspace{2pt};\hspace{2pt}}
\newcommand\po{\mathscr{P}}
\newcommand\te{\psi}
\newcommand\di{\mathscr{D}}
\newcommand\emo{\mathcal{E\!M}}
\newcommand\res{\mathcal{R}}
\newcommand\kl{\mathcal{K}l}
\renewcommand{\paragraph}[1]{\noindent\textbf{\normalsize #1}}
\newcommand*{\ccld}[1]{{\tikz[baseline=(X.base)]\node(X)[draw,shape=circle,inner sep=0]{\text{\scriptsize\strut$#1$}};}}
\newcommand{\sem}[1]{\llbracket{}#1\rrbracket{}}
\newcommand{\lsg}{\xymatrix{{}\ar@{~>}[r]&{}}}
\newcommand{\lat}{\xymatrix{{}\ar@{-->}[r]&{}}}
\newcommand{\Set}{\mathbf{Set}}
\newcommand{\Mon}{\mathbf{Mon}}
\newcommand{\Alg}[1][\Sigma,E]{\mathbf{Alg}(#1)}
\newcommand{\cat}{\mathbf{C}}
\newcommand{\EM}{\mathcal{E\!M}}
\newcommand{\Pow}{\mathscr{P}}
\newcommand{\poly}[1][\Sigma]{\mathsf{H}_{#1}}
\newcommand{\free}[1][\Sigma]{\mathsf{F}_{#1}}
\newcommand{\forg}[1][\Sigma]{\mathsf{U}_{#1}}
\newcommand{\Forg}{\mathsf{U}}
\newcommand{\Free}{\mathsf{F}}
\newcommand{\str}{\mathsf{st}}
\newcommand{\swp}{\mathsf{swap}}
\newcommand{\N}{\mathbb{N}}
\newcommand{\algb}[1][A]{\mathcal{#1}}
\newcommand{\epi}{\twoheadrightarrow}
\newcommand{\id}{\mathrm{id}}
\newcommand{\prepare}[1][\algb]{\delta^V_{#1}}
\newcommand{\evaluate}[1][\algb]{\sigma^V_{#1}}
\newcommand{\inv}{^{-1}}
\newcommand{\ari}{\mathrm{ar}}
\newcommand{\s}{^{\Sigma}}
\newcommand{\lsem}{\llbracket}
\newcommand{\rsem}{\rrbracket}
\newcommand{\prog}[1][p]{\mathtt{#1}}
\mathchardef\ls="213C    
\mathchardef\gr="213E    
\mathchardef\uparrow="0222  
\mathchardef\downarrow="0223  
\title{Layer by Layer -- Combining Monads}
\author{Fredrik Dahlqvist \and Louis Parlant \and Alexandra Silva\thanks{This work was partially supported by ERC grant ProfoundNet.}}
 \institute{University College London}
\begin{document}
\maketitle
%
%

\begin{abstract} 
We develop a modular method to build algebraic structures. Our approach is categorical: we describe the layers of our construct as monads, and combine them using distributive laws.

Finding such laws is known to be difficult and our method identifies precise sufficient conditions for two monads to distribute. We either (i) concretely build a distributive law which then provides a monad structure to the composition of layers, or (ii) pinpoint the algebraic obstacles to the existence of a distributive law and suggest a weakening of one layer that ensures distributivity.

This method can be applied to a step-by-step construction of a  programming language. Our running example will involve three layers: a basic imperative language enriched first by adding non-determinism and then probabilistic choice. The first extension works seamlessly, but the second encounters an obstacle, resulting in an `approximate' language very similar to the probabilistic network specification language ProbNetKAT.
\end{abstract}

\section{Introduction}\label{sec:intro}

The practical objective of this paper is to provide a systematic and modular understanding of the design of recent programming languages such as NetKAT~\cite{foster2015coalgebraic} and ProbNetKAT~\cite{foster2016probabilistic,cms} by re-interpreting their syntax as a layering of monads. However, in order to solve this problem, we develop a very general technique for building \emph{distributive laws between monads} whose applicability goes far beyond understanding the design of languages in the NetKAT family. Indeed, the combination of monads has been an important area of research in theoretical computer science ever since Moggi developed a systematic understanding of computational effects as monads in \cite{moggi1991notions}. In this paradigm -- further developed by Plotkin, Power and others in e.g. \cite{plotkin2002notions,benton2002monads} -- the question of how to combine computational effects can be treated systematically by studying the possible ways of combining monads. This work can also be understood as a contribution to this area of research.

 

Combining effects is in general a non-trivial issue, but diverse methods have been studied in the literature. A \emph{monad transformer}, as described in \cite{benton2002monads}, is a way to enrich any theory with a specific effect. These transformers allow a step-by-step construction of computational structures, later exploited by Hudak et al. \cite{liang1995monad,liang1996modular}. 
In \cite{hyland2006combining}, Hyland, Plotkin and Power systematized the study of effect combinations by introducing two canonical constructions for combining monads, which in some sense lie at the extreme ends of the collection of possible combination procedures. At one end of the spectrum they define the \emph{sum} of monads which consists in the juxtaposition of both theories with no interaction whatsoever between computational effects. At the other end of the spectrum they define the \emph{tensor} of two monads where both theories are maximally interacting in the sense that ``each operator of one theory commutes with each operation of the other'' (\cite{hyland2006combining}). In \cite{hyland2004combining} they combine exceptions, side-effects, interactive input/output, non-determinism and continuations using these operations.

In some situations neither the sum nor the tensor of monads is the appropriate construction, and some intermediate level of interaction is required. 
From the perspective of understanding the design of recent programming languages which use layers of non-determinism and probabilities (e.g.~ProbNetKAT), there are two reasons to consider combinations other than the sum or the tensor. First, there is the unavoidable mathematical obstacle which arises when combining sequential composition with non-deterministic choice (see the simple example below), two essential features of languages in the NetKAT family. When combining two monoid operations with the tensor construction, one enforces the equation $\tt (p;q)+(r;s)=(p+r);(q+s)$ which means, by the Eckmann-Hilton argument, that the two operations collapse into a single commutative operation; clearly not the intended construction. Secondly, and much more importantly, the intended \emph{semantics} of a language may force us to consider specific and limited interactions between its operations. This is the case for languages in the NetKAT family, where the intended trace semantics suggests \emph{distributive laws} between operations, for instance that sequential composition distributes over non-deterministic choice (but not the converse). For this reason, the focus of this paper will be to explicitly construct \emph{distributive laws} between monads.



It is worth noting that existence of distributive laws is a subtle question and having automatic tools to derive these is crucial in avoiding mistakes. As a simple example in which several mistakes have appeared in the literature, consider the composition of the  powerset monad $\po$ with itself. Distributive laws of $\po$ over $\po$ were proposed in 1993 by King \cite{king1993combining} and in 2007 \cite{manes2007monad}, with a subsequent correction of the latter result by Manes and Mulry themselves in a follow-up paper. In 2015, Klin and Rot made a similar claim \cite{klin2015coalgebraic}, but recently Klin and Salamanca have in fact showed that there is no distributive law of $\po$ over itself and explain carefully why all the mistakes in the previous results were so subtle and hard to spot \cite{KlinPP}. This example shows that this question is very technical and sometimes counter-intuitive. Our general and modular approach 
provides a fine-grained method for determining \textbf{(a)} if a monad combination by distributive law is possible, \textbf{(b)} if it is not possible, exactly which features are broken by the extension and \textbf{(c)} suggests a way to fix the composition by modifying one of our monads. In other words, this enables informed design choices on which features we may accept to lose in order to achieve greater expressive power in a language through monad composition. 
 
The original motivation for this work is very concrete and came from trying to understand the design of ProbNetKAT, a recently introduced programming language with non-determinism and probabilities ~\cite{foster2016probabilistic,cms}. The non-existence of a distributive law between the powerset monad and the distribution monad, first proved by Varacca \cite{2003:VaraccaProbability} and discussed recently in \cite{bonchi2017power}, is a well known problem in semantics. As we will show, our method enables us to modularly build ProbNetKAT based on the composition of several monads capturing the desired algebraic features. The method derives automatically which equations have to be dropped when adding the probabilistic layer providing a principled justification to the work initially presented in~\cite{foster2016probabilistic,cms}.

\paragraph{A simple example.} Let us consider a set \texttt{P} of atomic programs, and build a `minimal' programming language as follows. Since sequential composition is essential to any imperative language we start by defining the syntax as:
\begin{equation}\label{syn1}
\tt
p::= skip \mid p\scp p \mid a\in P 
\end{equation}
and ask that the following programs be identified:
\begin{equation}\label{law1}
\tt
p\scp skip =p= skip\scp p   \qquad \text{ and } \qquad p\scp(q\scp r)= (p\scp q)\scp r 
\end{equation}
The language defined by the operations of \eqref{syn1} and the equations of \eqref{law1} can equally be described as the application of the \emph{free monoid monad} $(-)^*$ to the set of atomic programs $\tt P$. If we assign a semantics to each basic program $\tt P$,  the semantics of the extended language can be defined as finite sequences (or traces) of the basic semantics. 
In a next step, we might want to enrich this basic language by adding a non-deterministic choice operation $+$ and the constant program $\tt abort$, satisfying the equations:
\begin{equation}\label{law2}
\tt
abort + p = p = p + abort \qquad p+p = p \qquad p+q = q+p \qquad p+(q+r) = (p+q)+r 
\end{equation}
The signature $(\tt abort, +)$ and the axioms \eqref{law2} define join-semilattices, and the monad building free semilattices is the \emph{finitary  powerset monad} $\Pow$. To build our language in a modular fashion we thus want to apply $\Pow$ on top of our previous construction and consider the programming language where the syntax and semantics arise from $\Pow(\tt P^*)$.
For this purpose we combine both monads to construct a new monad $\Pow(-^*)$ by building a distributive law $(-)^*\Pow\to \Pow(-)^*$. As explained above, this approach is semantically justified by the intended trace semantics of the language, and will ensure that operations from the inner layer distribute over the outer ones, i.e. 
\begin{equation}\label{law3}
\tt
p ; (q + r) = p;q + p;r \qquad (q + r) ; p = q;p + r;p \qquad p ; abort = abort ; p = abort 
\end{equation}
Our method proves and relies on the following theorem: if $\Pow$ preserves the structure of $(-)^*$-algebra defined by \eqref{syn1}-\eqref{law1}, then the composition $\Pow(-^*)$ has a monad structure provided by the corresponding distributive law. 
Applying this theorem to our running example, the first step is to lift the signature \eqref{syn1}, in other words to define new canonical interpretations in $\Pow(\tt P^*)$ for $;$ and $\tt skip$. Once this lifting is achieved, the equations in \eqref{law1}, arising from the inner layer, can be interpreted in $\Pow(-^*)$. We need to check if they still hold: is the new interpretation of $;$ still associative? To answer this question, our method makes use of categorical diagrams to obtain precise conditions on our monadic constructs. Furthermore, in the case where equations fail to hold, we provide a way to identify exactly what stands in the way of monad composition. We can then offer tailor-made adjustments to achieve the composition and obtain a `best approximate' language, with slightly modified monads.

\paragraph{Structure of this paper.} Section \ref{sec:primer} presents some basic facts about monads and distributive laws and fixes the notation. In Section \ref{sec:distlaw1} we recall the well-known fact \cite{sokolova2007generic,milius2009complete} that there exists a distributive law of any polynomial functor over a monoidal $\Set$-monad. In particular this shows that \emph{operations} can be lifted by monoidal monads. In fact, the techniques presented in this paper can be extended beyond the monoidal case, but since we won't need such monads in our applications, we will focus on monoidal monads for which the lifting of operations is very straightforward.  We then show in Section \ref{sec:diagres} when \emph{equations} can also be lifted. We isolate two conditions on the lifting monad which guarantee that any equation can be lifted. These two conditions correspond to a monad being \emph{affine} \cite{kock1972bilinearity} and \emph{relevant} \cite{jacobs1994semantics}. We also characterise the general form of equations preserved by monads which only satisfy a subset of these conditions. Interestingly, together with the symmetry condition \eqref{diag:monoidal:sym} which is always satisfied by monoidal $\Set$-monads, we recover what are essentially the three structural laws of classical logic (see also \cite{jacobs1994semantics}). In Section \ref{sec:application} we show how the $\ast$-free fragment of ProbNetKAT can be built in systematic way by construction distributive laws between the three layers of the language. 

\section{A primer on monads, algebras and distributive laws}\label{sec:primer}
\paragraph{\bf Monads and  $(\Sigma,E)$-algebras.}  
For the purposes of this paper, we will always consider monads on $\Set$ (\cite{awodey,mac2013categories,moggi1991notions}). 
The core language described in the introduction is defined by the signature $\Sigma=\{\scp,\tt skip\}$ and the set $E$ of equations given by \eqref{law1}. More generally, we view programming languages as algebraic structures defined by a signature $(\Sigma, \ari:\Sigma\to\N)$ and a set of equations $E$ enforcing program equivalence. To formalize this we first define a $\Sigma$-algebra to be a set $X$ together with an interpretation $\lsem \sigma\rsem  \colon X^{\ari(\sigma)} \to X$ of each operation $\sigma\in\Sigma$. A $\Sigma$-algebra can be conveniently represented as an algebra for the polynomial functor $\poly=\coprod_{\sigma\in\Sigma}(-)^{\ari(\sigma)}$ defined by the signature, i.e. as a set $X$ together with a map $\beta\colon \poly X\to X$. A $\Sigma$-algebra morphism between $\beta\colon \poly X\to X$ and $\gamma\colon \poly Y\to Y$ is a map $f\colon X\to Y$ such that $\gamma\circ \poly f=f\circ \beta$. The category of $\Sigma$-algebras and $\Sigma$-algebra morphisms is denoted $\Alg[\Sigma]$.
In particular, the set $\free X$ of all $\Sigma$-terms is a $\Sigma$-algebra -- the free $\Sigma$-algebra over $X$ -- and $\free$ is a functor $\Set\to\Alg[\Sigma]$ forming an adjunction
\begin{equation}\label{diag:adjunctionR}
\free \dashv \Forg_\Sigma:\Alg[\Sigma]\to \Set
\end{equation} 
Since it will not lead to any ambiguity we will usually overload the symbol $\free$ to also denote the monad $\Forg_\Sigma \free:\Set\to\Set$ arising from this adjunction.

Given a $\Sigma$-algebra $\mathcal{A}$, a free $\Sigma$-term $s$ built over variables in a set $V$, and a valuation map $v: V\to \Forg_\Sigma \mathcal{A}$, we define the interpretation $\lsem s\rsem_v$ of $s$ in $\mathcal{A}$ recursively in the obvious manner. We say that an equation $s=t$ between free $\Sigma$-terms is valid in $\mathcal{A}$, denoted $\mathcal{A}\models s=t$, if for every valuation $v: V\to \Forg_\Sigma \mathcal{A}$, $\lsem s\rsem_v=\lsem t\rsem_v$. Given a set $E$ of equations we define a $(\Sigma,E)$-algebra as a $\Sigma$-algebra in which all the equations in $E$ are valid. We denote by $\Alg$ the subcategory of $\Alg[\Sigma]$ consisting of $(\Sigma,E)$-algebras. There exists a functor $\Free:\Set\to\Alg$ building free $(\Sigma,E)$-algebras which is left adjoint to the obvious forgetful functor:
\begin{equation}\label{diag:adjunction}
\Free \dashv \Forg:  \Alg[\Sigma, E]\to\Set
\end{equation} 
In our running example all monads arise from a finitary syntax, and thus from an adjunction of the type \eqref{diag:adjunction}.
%

\paragraph{\bf  Eilenberg-Moore categories.} An algebra for the monad $T$ is a set $X$ together with an map $\alpha\colon TX\to X$ such that the diagrams in \eqref{algebras} commute. A morphism $ (X,\alpha)  \stackrel{f}{\rightarrow}  (Y,\beta)$ of $T$-algebras is a morphism  $ X \stackrel{f}{\rightarrow} Y $ in $\Set$ verifying $\beta \circ Tf =f \circ \alpha$.
\vspace{-.6cm}\begin{align}
\xymatrix@R=.35cm{
 	{TTX}  	
 	 				\ar[r]^{\mu_X} 
 	 				\ar[d]_{T \alpha}
 & 				
 	{TX}  	
 	 				\ar[d]^{\alpha} 	 				
 & & X 			
 					\ar[r]^{\eta_X} 
 					\ar[rd]_{1}
 & TX 
 					\ar[d]^{\alpha}
 \\
  	{TX}  	
 	 				\ar[r]^{\alpha} 
 & 				
 	{X}  
 & & &	{X}  
}\label{algebras}
\end{align}
The category of $T$-algebras and $T$-algebra morphisms is called the \emph{Eilenberg-Moore} category of the monad $T$, and denoted $\emo(T)$. There is an obvious forgetful functor $\Forg_E:\emo(T)\to \Set$ which sends an algebra to its carrier, it has a left adjoint $\Free_E:\Set\to\emo(T)$ which sends a set $X$ to the free $T$-algebra $\mu_X: T^2 X\to TX$. Note that the adjunction $\Free_E\dashv \Forg_E$ gives rise to the monad $T$. 
 A \emph{lifting} of a functor $F: \Set \to \Set$ to $\emo(T)$ is a functor $\hat{F}$ on  $\emo(T)$ such that $\Forg_E\circ\widehat{F}=F\circ \Forg_E$
\begin{lemma}[\cite{mac2013categories} VI.8. Theorem 1]\label{lem:EM_Alg_Equiv}
For any adjunction of the form \eqref{diag:adjunction},  $\EM(\Forg\Free)$ and $\Alg$ are equivalent categories.
\end{lemma}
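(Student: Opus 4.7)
The plan is to exhibit the canonical comparison functor $K\colon \Alg \to \EM(\Forg\Free)$ and to construct an explicit pseudoinverse $L$. For $\mathcal{A}\in\Alg$, I would set $K(\mathcal{A}) = (\Forg\mathcal{A},\, \Forg\epsilon_{\mathcal{A}})$, where $\epsilon$ is the counit of $\Free\dashv\Forg$ from \eqref{diag:adjunction}; the triangle identities and the monadic multiplication $\mu = \Forg\epsilon\Free$ immediately ensure that $\Forg\epsilon_{\mathcal{A}}$ satisfies the unit and multiplication axioms of \eqref{algebras}. On a morphism $f\colon\mathcal{A}\to\mathcal{B}$ in $\Alg$, set $K(f) = \Forg f$; naturality of $\epsilon$ makes this an Eilenberg-Moore morphism.

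For the pseudoinverse $L\colon\EM(\Forg\Free)\to\Alg$, given a $T$-algebra $(X,\alpha)$ with $T=\Forg\Free$, I would equip $X$ with a $\Sigma$-algebra structure via the composite
\[
\poly X \xrightarrow{\poly\eta_X} \poly TX \xrightarrow{b_X} TX \xrightarrow{\alpha} X,
\]
where $b_X\colon\poly TX\to TX$ is the $\poly$-algebra structure underlying the free algebra $\Free X\in\Alg$. On morphisms $L(f)=f$ suffices, since a $T$-algebra morphism commutes with $\eta$ and with $b$. The main technical obstacle is to show that $L(X,\alpha)$ actually lies in $\Alg$, i.e., that every equation of $E$ is valid under this interpretation. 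The plan here is to factor the interpretation of any term $s$ in $X$ through the free algebra: for a valuation $v\colon V\to X$, one shows $\lsem s\rsem_v = \alpha\circ Tv \circ \lsem s\rsem_{\eta_V}$, where $\lsem s\rsem_{\eta_V}\in \Free V=TV$ is the interpretation of $s$ in the free $(\Sigma,E)$-algebra on $V$. Since $\Free V\in\Alg$ satisfies all of $E$ by construction, the equality $\lsem s\rsem_v = \lsem t\rsem_v$ then follows for any $s=t\in E$. The delicate part is to propagate this factorisation through nested operations, which requires a diagram chase using both the naturality of $\eta$ and the $T$-algebra axiom $\alpha\circ\mu_X = \alpha\circ T\alpha$.

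To conclude the equivalence, I would verify that $K$ and $L$ are mutually inverse up to natural isomorphism. For $KL$, the induced $T$-algebra structure on $X$ unfolds, via the explicit formula for $\epsilon_{L(X,\alpha)}$ in terms of $b$ and the multiplication law, back to $\alpha$ itself, giving $KL\cong\Id_{\EM(T)}$. For $LK$, both $\mathcal{A}$ and $LK(\mathcal{A})$ share the carrier $\Forg\mathcal{A}$, and their $(\Sigma,E)$-algebra structures agree because $\epsilon_{\mathcal{A}}$ is, by the universal property of $\Free$, the unique $\Alg$-morphism extending the identity on $\Forg\mathcal{A}$; in particular it is a $\Sigma$-algebra morphism, which forces the two structure maps on $\Forg\mathcal{A}$ to coincide. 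This yields $LK\cong\Id_{\Alg}$ and hence the claimed equivalence of categories.
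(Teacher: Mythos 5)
The paper offers no proof of this lemma -- it is quoted directly from Mac Lane (CWM, VI.8, Theorem~1), where the argument is exactly the one you reconstruct: the comparison functor $K(\mathcal{A})=(\Forg\mathcal{A},\Forg\epsilon_{\mathcal{A}})$ is an isomorphism of categories onto $\EM(\Forg\Free)$, with inverse given by reading off the $\Sigma$-structure $\alpha\circ b_X\circ\poly\eta_X$ and checking the equations via the factorisation $\lsem s\rsem_v=\alpha\circ Tv\circ\lsem s\rsem_{\eta_V}$ (whose inductive step indeed rests on naturality of $\eta$ and on $\alpha\circ\mu_X=\alpha\circ T\alpha$, as you note). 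Your proposal is correct and matches the cited proof.
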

The functors connecting $\EM(\Forg\Free)$ and  $\Alg$ are traditionally called \emph{comparison functors}, and we will denote them by $M\colon \EM(\Forg\Free)\to\Alg$ and $K\colon \Alg\to\EM(\Forg\Free)$.
Consider first the free monad $\free$ for a signature $\Sigma$ (i.e. the monad generated by the adjunction \eqref{diag:adjunctionR}). The comparison functor $M\colon \Alg[\Sigma]\to\EM(\free)$ maps the free $\free$-algebra over $X$, that is $\mu^{\free}_X\colon \free^2 X\to \free X$ to the free $\poly$-algebra over $X$ which we shall denote by $\alpha_X: \poly \free X\to \free X$. It is well-known that $\alpha_X$ is an isomorphism. Moreover, the maps $\alpha_X$ define a natural transformation $\poly \free\to \free$. Similarly, in the presence of equations, if we consider the adjunction $\Free\dashv \Forg$ of \eqref{diag:adjunction} and the associated monad $T=\Forg\Free$, then the comparison functor $M'\colon \Alg\to\EM(T)$ sends the free $T$-algebra $\mu^T_X: T^2X\to TX$ to an $\poly$-algebra which we shall denote $\rho_X: \poly TX \to TX$. Again, the maps $\rho_X$ define a natural transformation $\poly T\to T$, but in general $\rho_X$ is no longer an isomorphism: in the case of monoids and of a set $X=\{x,y,z\}$, we have $\rho_X(x;(y;z))=\rho_X((x;y);z)$.



\paragraph{\bf  Distributive laws.} Let  $(S, \eta^S, \mu^S)$ and  $(T, \eta^T, \mu^T)$ be monads, a \emph{distributive law of $S$ over $T$}  (see \cite{beck1969distributive}) is a natural transformation $\lambda: ST\to TS$ satisfying:

\begin{adjustbox}{max width=\textwidth}
\begin{tabular}{l l l l}
\begin{minipage}[t]{3cm}
\begin{equation}
\xymatrix@C=4ex
{
& S\ar[dl]_{S\eta^T}\ar[dr]^{\eta^T S}\\
ST\ar[rr]_{\lambda} & & TS
}\label{diag:distlaw:etaT}\tag{DL. 1}
\end{equation}
\end{minipage}

&

\begin{minipage}[t]{3cm}
\begin{equation}
\xymatrix@C=4ex
{
& T\ar[dl]_{\eta^S T}\ar[dr]^{T\eta^S}\\
ST\ar[rr]_{\lambda} & & TS
}\label{diag:distlaw:etaS}\tag{DL. 2}
\end{equation}
\end{minipage}

&

\begin{minipage}[t]{5cm}
\begin{equation}
\xymatrix@C=4ex
{
STT\ar[d]_{S \mu^T}\ar[r]^{\lambda T} & TST\ar[r]^{T\lambda} & TTS\ar[d]_{\mu^T S}\\
ST\ar[rr]_{\lambda} & & TS
}\label{diag:distlaw:muT}\tag{DL. 3}
\end{equation}
\end{minipage}

&

\begin{minipage}[t]{4cm}
\begin{equation}
\xymatrix@C=4ex
{
SST\ar[d]^{\mu^S T}\ar[r]^{S \lambda} & STS\ar[r]^{\lambda S} & TSS\ar[d]^{T\mu^S}\\
ST\ar[rr]_{\lambda} & & TS
}\label{diag:distlaw:muS}\tag{DL. 4}
\end{equation}
\end{minipage}
\end{tabular}
\end{adjustbox}
\vspace{2ex}

\noindent If $\lambda$ only satisfies \eqref{diag:distlaw:etaS} and \eqref{diag:distlaw:muS}, we will say that $\lambda$ is a distributive law of the the monad $S$ over \emph{functor} $T$, or in the terminology of \cite{trs}, an $\emo$-law of $S$ over $T$. Dually, if $\lambda$ only satisfies \eqref{diag:distlaw:etaT} and \eqref{diag:distlaw:muT}, $\lambda$ is known as a distributive law of the \emph{functor} $S$ over the monad  $T$, or $\kl$-law of $S$ over $T$ \cite{trs}.

\begin{theorem}{\cite{beck1969distributive,trs,balan2011coalgebras}}\label{thm:distLaws_Liftings}
$\emo$-laws $\lambda: SF\to FS$ and liftings of $F$ to $\emo(S)$ are in one-to-one correspondence.
\end{theorem}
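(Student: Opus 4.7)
The plan is to exhibit mutually inverse constructions. Given an $\emo$-law $\lambda : SF \to FS$, I would define $\hat F : \emo(S) \to \emo(S)$ by $\hat F(X,\alpha) = (FX,\, F\alpha \circ \lambda_X)$ on objects and $\hat F(f) = Ff$ on morphisms. To see that $(FX, F\alpha \circ \lambda_X)$ is a genuine $S$-algebra I would verify the two diagrams of \eqref{algebras}: the unit square reduces, via \eqref{diag:distlaw:etaS} and the unit law $\alpha \circ \eta^S_X = \id_X$, to $F(\id_X) = \id_{FX}$; the multiplication square follows by pasting together \eqref{diag:distlaw:muS}, naturality of $\lambda$, and the associativity law for $\alpha$. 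That $Ff$ is an $S$-algebra morphism whenever $f$ is one is immediate from naturality of $\lambda$ combined with $f \circ \alpha = \beta \circ Sf$. By construction the forgetful functor sends $\hat F$ to $F$, so $\hat F$ is indeed a lifting.

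In the other direction, given a lifting $\hat F$, I would recover $\lambda$ by exploiting the free $S$-algebras. Applying $\hat F$ to the free algebra $(SX, \mu^S_X)$ yields an algebra structure $h_X : SFSX \to FSX$, and I set $\lambda_X := h_X \circ SF\eta^S_X$. Naturality of $\lambda$ follows from functoriality of $\hat F$ applied to the free-algebra morphisms $Sg : (SX, \mu^S_X) \to (SY, \mu^S_Y)$ and the fact that $\hat F$ agrees with $F$ on underlying sets. The axiom \eqref{diag:distlaw:etaS} follows from the unit law for $h_X$ together with naturality of $\eta^S$, while \eqref{diag:distlaw:muS} follows from the associativity law for $h_X$ combined with a naturality argument sliding $\eta^S$ past.

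The main remaining step, and the one that requires the most careful bookkeeping, is to check that these two constructions are mutually inverse. For $\lambda \mapsto \hat F \mapsto \lambda'$: by construction applied to $(SX, \mu^S_X)$ we have $h_X = F\mu^S_X \circ \lambda_{SX}$, so $\lambda'_X = F\mu^S_X \circ \lambda_{SX} \circ SF\eta^S_X$; using naturality of $\lambda$ to pass $F\eta^S_X$ outward and then the monad unit law $\mu^S_X \circ S\eta^S_X = \id_{SX}$, this collapses back to $\lambda_X$. For $\hat F \mapsto \lambda \mapsto \hat F'$: for arbitrary $(X,\alpha)$ with $\hat F$-image $(FX,\gamma)$, the structure map $\alpha$ itself is an $S$-algebra morphism $(SX, \mu^S_X) \to (X, \alpha)$, so functoriality of $\hat F$ gives $\gamma \circ SF\alpha = F\alpha \circ h_X$; precomposing with $SF\eta^S_X$ and using $\alpha \circ \eta^S_X = \id_X$ yields $\gamma = F\alpha \circ \lambda_X$, which is precisely the structure produced by $\hat F'$. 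The main obstacle throughout is keeping track of the correct naturality squares and unit/multiplication laws in the right order; once one recognises the central role played by the free algebras $\mu^S_X$ and by $\eta^S$, each individual verification is routine.
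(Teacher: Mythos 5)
Your proof is correct and is essentially the standard argument: the paper itself gives no proof of this statement, citing it as a known result from Beck et al., and your two constructions ($\hat F(X,\alpha)=(FX, F\alpha\circ\lambda_X)$ one way, $\lambda_X = h_X\circ SF\eta^S_X$ via the free algebras the other way) together with the inverse checks are exactly the argument found in those references. The only step stated a little too briefly is the verification of \eqref{diag:distlaw:muS}, which also needs functoriality of $\hat F$ applied to the free-algebra morphism $\mu^S_X:(S^2X,\mu^S_{SX})\to(SX,\mu^S_X)$ (giving $F\mu^S_X\circ h_{SX}=h_X\circ SF\mu^S_X$), not just the associativity law for $h_X$ and naturality of $\eta^S$.
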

If there exists a distributive law $\lambda: TS\to ST$ of the monad $T$ over the monad $S$, then the composition of $S$ and $T$ also forms a monad $(ST,u, m)$, whose unit $u$ and multiplication $m$ are given by:

\begin{adjustbox}{width=\textwidth}
\xymatrix{
 	{X}  	
 	 				\ar[r]^{ \eta_{X}^T} 
 	 				\ar@/_1pc/[rr]|{\;u_X\;} 
 & 				
 	{TX}  	
 	 				\ar[r]^-{ \eta_{TX}^S}
& 
	{STX}
	&
	{STST X}  	
 	 				\ar[r]^{ S \lambda_{TX}} 
 	 				\ar@/_1pc/[rrr]|{\;m_X\;} 
 & 				
 	{SSTTX}  	
 	 				\ar[r]^{ \mu_{TTX}^S}
&
 	{STTX}  	
 	 				\ar[r]^{ S \mu_{X}^T}
&
 	{STX} 
}
\end{adjustbox}

If $\EM(S)\simeq \Alg$ and $\EM(T)\simeq \Alg[\Sigma',E']$, then a distributive law $ST\to TS$ implements the distributivity of the operations in $\Sigma$ over those of $\Sigma'$.
%
%
%

\section{Building distributive laws between monads}
\label{sec:distlaw1}

In this section we will show how to construct a distributive law $\lambda\colon ST\to TS$  between monads via a \emph{monoidal structure} on $T$. 

\subsection{Monoidal monads}
Let us briefly recall some relatively well-known categorical notion.  A \emph{lax monoidal functor} on a monoidal category $(\cat,\otimes,I)$, or simply a monoidal functor\footnote{We will never consider the notion of \emph{strong} monoidal functor, so this terminology should not lead to any confusion.}, is an endofunctor $F:\cat\to\cat$ together with natural transformations $\psi_{X,Y}: FX\otimes FY\to F(X\otimes Y)$ and $\psi^0: I\to FI$ satisfying the diagrams:

\vspace{-20pt}
\begin{adjustbox}{max width=\textwidth}
\hspace{-0.5cm}\begin{tabular}{l  c  r}
\begin{minipage}[t]{4cm}
\begin{equation}
\xymatrix@C=7ex
{
FX\otimes I\ar[r]^{\id_{FX}\otimes \psi^0}\ar[r]^{\id_{FX}\otimes \psi^0}\ar[d]^{\rho_{FX}} & FX\otimes FI\ar[d]_{\psi_{X,I}} \\
FX & F(X\otimes I)\ar[l]_{F\rho_X}
}\label{diag:monoidal:left_unit}\tag{MF. 1}
\end{equation}
\end{minipage}
&
\multirow{2}{*}{ \begin{minipage}[t]{6.5cm}
\vspace{1cm}
\begin{equation}
\xymatrix@C=12ex
{
(FX\otimes FY)\otimes FZ\ar[r]^{\alpha_{FX,FY,FZ}} \ar[d]^{\psi_{X,Y}\otimes \id_{FZ}}& FX\otimes (FY\otimes FZ) \ar[d]_{\id_{FX}\otimes \psi_{Y,Z}}\\
F(X\otimes Y)\otimes FZ\ar[d]_{\psi_{X\otimes Y, Z}} & FX\otimes F(Y\otimes Z)\ar[d]_{\psi_{X,Y\otimes Z}}\\
F((X\otimes Y)\otimes Z)\ar[r]^{F\alpha_{X,Y,Z}} & F(X\otimes (Y\otimes Z)))
}\label{diag:monoidal:assoc}\tag{MF. 3}
\end{equation}
\end{minipage}}
\\
\begin{minipage}[t]{4cm}
\begin{equation}
\xymatrix@C=7ex
{
I\otimes FX\ar[r]^{\psi^0\otimes \id_{FX}}\ar[r]^{\psi^0\otimes \id_{FX}}\ar[d]^{\rho'_{FX}} & FI\otimes FX\ar[d]_{\psi_{I,X}} \\
FX & F(I\otimes X)\ar[l]_{F\rho'_X}
}\label{diag:monoidal:right_unit}\tag{MF. 2}
\end{equation}
\end{minipage}
\end{tabular}
\end{adjustbox}

\noindent where $\alpha$ is the associator of $(\cat,\otimes, I)$ and $\rho,\rho'$ the right and left unitors respectively. The diagrams \eqref{diag:monoidal:left_unit}, \eqref{diag:monoidal:right_unit} and \eqref{diag:monoidal:assoc} play a key role in the lifting of operations and equations in this section and the next. In particular they ensure that any unital (resp. associative) operation lifts to a unital (resp. associative) operation. We will sometimes refer to $\psi$ as the \emph{Fubini transformation} of $F$.

A \emph{monoidal monad} $T$ on a monoidal category is a monad whose underlying functor is monoidal for a natural transformation $\psi_{X,Y}: TX\otimes TY\to T(X\otimes Y)$ and $\psi^0=\eta_I$, the unit of the monad at $I$, and whose unit and multiplication are monoidal natural transformations, that is to say:

\begin{adjustbox}{max width=\textwidth}
\vspace{-7mm}\hspace{-0.5cm}\begin{tabular}{l c r}
\begin{minipage}[t]{5.5cm}
\begin{equation}
\xymatrix
{
X\otimes Y\ar[r]^{\eta_X\otimes \eta_Y} \ar[dr]_{\eta_{X\otimes Y}}& TX\otimes TY\ar[d]^{\psi_{X,Y}}\\
& T(X\otimes Y)
}\label{diag:monoidal:unit}\tag{MM.1}
\end{equation}
\end{minipage}
&
\hspace{4cm}
&
\begin{minipage}[t]{6cm}
\begin{equation}
\mspace{-195mu}\xymatrix
{
T^2X\otimes T^2 Y\ar[d]_{\mu_{X}\otimes \mu_{Y}} \ar[r]^{\psi_{TX,TY}} & T(TX\otimes TY) \ar[r]^{T\psi_{X,Y}}& TT(X\otimes Y)\ar[d]_{\mu_{X\otimes Y}}\\
TX\otimes TY\ar[rr]^{\psi_{X,Y}} & & T(X\otimes Y)
}\label{diag:monoidal:multiplication}\tag{MM.2}
\end{equation}
\end{minipage}
\end{tabular}
\end{adjustbox}

\noindent Moreover, a monoidal monad is called \emph{symmetric monoidal} if
\begin{equation}
\xymatrix
{
TX\otimes TY\ar[r]^{\psi_{X,Y}}\ar[d]_{\swp_{TX,TY}} & T(X\otimes Y)\ar[d]^{T\swp_{X,Y}}\\
TY\otimes TX\ar[r]_{\psi_{Y,X}} & T(Y\otimes X)
}\label{diag:monoidal:sym}\tag{SYM}
\end{equation}
where $\swp: (-)\otimes(-)\to (-)\otimes (-)$ is the argument-swapping transformation (natural in both arguments). 

We now present a result which shows that for monoidal categories which are sufficiently similar to $(\Set,\times ,1)$, being monoidal is equivalent to being symmetric monoidal. The criteria on $(\cat,\otimes ,I)$ in the following theorem are due to \cite{sato2017giry} and generalize the strength unicity result of \cite[Prop. 3.4]{moggi1991notions}. Our usage of the concept of strength in what follows is purely technical, it is the monoidal structure which is our main object of interest. We therefore refer the reader to e.g. \cite{moggi1991notions} for the definitions of strength and commutative monad.

\begin{theorem}\label{thm:symMon}
Let $T:\cat\to\cat$ be a monad over a monoidal category $(\cat,\otimes, I)$ whose tensor unit $I$ is a separator of $\cat$ (i.e.  $f,g: X\to Y$ and $f\neq g$ implies $\exists x: I\to X$ s.th. $f\circ x\neq g\circ x$)  and such that for any morphism $z:I\to X\otimes Y$ there exist $x: I\to X, y\to Y$ such that $z=(x\otimes y)\circ \rho_I\inv$. Then t.f.a.e.
\begin{enumerate}[(i)]
\item There exists a unique natural transformation $\psi_{X,Y}: TX\otimes TY\to T(X\otimes Y)$ making $T$ monoidal
\item There exists a unique strength $\str_{X,Y}: X\times TY\to T(X\otimes Y)$ making $T$ commutative
\item There exists a unique natural transformation $\psi_{X,Y}: TX\otimes TY\to T(X\otimes Y)$ making $T$ symmetric monoidal
\end{enumerate}
\end{theorem}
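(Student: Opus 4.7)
The plan is to establish the cycle $(\mathrm{iii})\Rightarrow(\mathrm{i})\Rightarrow(\mathrm{ii})\Rightarrow(\mathrm{iii})$. The first implication is immediate: any symmetric monoidal structure is a fortiori monoidal, and uniqueness for the larger class of monoidal structures entails uniqueness for the sub-class of symmetric ones.

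For $(\mathrm{i})\Rightarrow(\mathrm{ii})$ I would recover a strength from a monoidal structure in the standard way, setting
\[\str_{X,Y} \;=\; \psi_{X,Y}\circ(\eta_X\otimes \id_{TY}) \;:\; X\otimes TY\to T(X\otimes Y),\]
together with the dual right strength $\str'_{X,Y}=\psi_{X,Y}\circ(\id_{TX}\otimes \eta_Y)$. The unit and associativity axioms for $\str$ then fall out of the diagrams \eqref{diag:monoidal:left_unit}--\eqref{diag:monoidal:assoc} combined with naturality of $\eta$ and the monoidal unit law \eqref{diag:monoidal:unit}. Commutativity of $T$ -- i.e. that the two Fubini orders of composing $\str$ and $\str'$ into a map $TX\otimes TY\to T(X\otimes Y)$ coincide -- reduces, after unfolding definitions, to the associativity coherence \eqref{diag:monoidal:assoc} and the multiplication law \eqref{diag:monoidal:multiplication}, since both orders produce variants of $\psi_{X,Y}$. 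The converse $(\mathrm{ii})\Rightarrow(\mathrm{iii})$ goes in the opposite direction: define $\psi_{X,Y}$ as this common Fubini composite; the monoidal functor diagrams and the monoidal-monad equations \eqref{diag:monoidal:unit}--\eqref{diag:monoidal:multiplication} then follow from the strength axioms and the commutativity equation in the standard way (see e.g.~\cite{moggi1991notions}).

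The real content of the theorem lies in the two uniqueness claims and in the symmetry condition \eqref{diag:monoidal:sym}, and this is precisely where the hypotheses on $(\cat,\otimes,I)$ come in. The argument is pointwise in spirit: to compare two parallel transformations $f,g:TX\otimes TY\to T(X\otimes Y)$ the separator property reduces the question to checking $f\circ z=g\circ z$ for every $z:I\to TX\otimes TY$, and the factorization hypothesis lets us rewrite each such $z$ as $(tx\otimes ty)\circ\rho_I^{-1}$ with $tx:I\to TX$ and $ty:I\to TY$. A candidate $\psi$ is thereby pinned down on all ``pure'' points of the form $(\eta_X\circ x)\otimes(\eta_Y\circ y)$ by the unit axiom \eqref{diag:monoidal:unit}, and the general case is reached by applying the factorization a second time, this time to $tx:I\to TX$ and $ty:I\to TY$ viewed as morphisms into $T(-)$ and collapsed back via naturality. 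The same pointwise reduction also gives \eqref{diag:monoidal:sym}: on factored points the square \eqref{diag:monoidal:sym} collapses to a naturality square for $\swp$ composed with $\psi\circ(\eta\otimes\eta)=\eta_{X\otimes Y}$, both of which are manifestly invariant under argument swap.

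The step I expect to be the main obstacle is the double use of the factorization hypothesis to reach the pure-point case without leaving the class of morphisms out of $I$; in less well-behaved monoidal categories one would in general have no reason for $\psi$ to be either unique or symmetric, so the proof is genuinely constrained to lean on both assumptions on $(\cat,\otimes,I)$ simultaneously, and any subtlety hidden in the theorem will sit at that junction.
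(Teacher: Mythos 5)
Your high-level plan is reasonable, and the translation between monoidal structures and commutative strengths underlying (i)$\Leftrightarrow$(ii) is indeed the standard Kock/Moggi correspondence; note that the paper itself gives no proof but defers to \cite{sato2017giry}, which generalizes \cite[Prop.~3.4]{moggi1991notions}. Before the main issue, one quantifier slip: in your step (iii)$\Rightarrow$(i) existence is immediate, but uniqueness is not --- knowing that the \emph{symmetric} monoidal structure is unique does not rule out further non-symmetric ones; it is the uniqueness half of (i)$\Rightarrow$(iii) that is free. This would be harmless if your later pointwise argument applied to every monoidal $\psi$, which brings us to the real problem.

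The genuine gap is in the pointwise reduction. After invoking the separator and factoring $z=(tx\otimes ty)\circ\rho_I\inv$, the morphisms $tx\colon I\to TX$ and $ty\colon I\to TY$ are \emph{arbitrary} points of $TX$ and $TY$; your proposal to ``apply the factorization a second time'' to them is not licensed by the hypotheses, which factor points of a tensor product, not points of $TX$, and cannot be repaired: an arbitrary point of $TX$ is in general not pure, i.e.\ not of the form $\eta_X\circ x$ (for $T=\Pow$ on $\Set$ almost none are). So neither the uniqueness of $\psi$ nor the symmetry square \eqref{diag:monoidal:sym} can be verified on pure points alone, and your argument stalls exactly at the junction you flagged. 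The correct route is to run the pointwise argument on the \emph{strength} instead: $\str_{X,Y}\colon X\otimes TY\to T(X\otimes Y)$ is natural in its first, un-lifted argument, so a point $x\colon I\to X$ is pulled out by naturality, reducing everything to the component $\str_{I,Y}$, which the unit axiom for strengths forces to equal $T\bigl((\rho'_Y)\inv\bigr)\circ\rho'_{TY}$; this is Moggi's unicity argument and needs no second factorization. One then recovers any monoidal $\psi$ uniquely from the strength and costrength it induces via $\psi_{X,Y}=\mu_{X\otimes Y}\circ T\str_{X,Y}\circ\str'_{X,TY}$, an identity that follows from naturality of $\psi$, \eqref{diag:monoidal:unit}, \eqref{diag:monoidal:multiplication} and the monad laws; symmetry of $\psi$ then follows because unicity of the (co)strength forces $\str'$ to be the $\swp$-conjugate of $\str$. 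Without this detour through the strength, the two hypotheses on $(\cat,\otimes,I)$ cannot be brought to bear on $\psi$ itself.
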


In particular, monoidal monads on $(\Set,\times,1)$ are necessarily symmetric (and thus commutative). As we will see in the next section (Theorem \ref{thm:preservation:noConditions}), this symmetry has deep consequences: it means that a large syntactically definable class of equations can always be lifted by monoidal monads. 
\subsection{Lifting operations}
First though, we show that being monoidal allows us to lift \emph{operations}. The following Theorem is well-known and can be found in e.g. \cite{sokolova2007generic,milius2009complete}.
\begin{theorem}\label{thm:monoidalDLaw}
Let $T:\Set\to\Set$ be a monoidal monad, then  for any finitary signature $\Sigma$, there exists a distributive law $\lambda^\Sigma: \poly T\to T \poly$ of the polynomial functor associated with $\Sigma$ over $T$.
\end{theorem}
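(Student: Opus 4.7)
The plan is to construct $\lambda^\Sigma$ componentwise by iterating the Fubini transformation $\psi$ to obtain a natural way to pull $T$ out of a finite product. Concretely, I would first define for each $n\geq 0$ a natural map $\psi^n_X\colon (TX)^n\to T(X^n)$ by recursion: set $\psi^0_X=\eta^T_1$ (using $\psi^0=\eta_I$ from the definition of monoidal monad and $I=1$ in $(\Set,\times,1)$), $\psi^1_X=\id_{TX}$, and
\[
\psi^{n+1}_X=\psi_{X,X^n}\circ(\id_{TX}\times \psi^n_X),
\]
using the canonical iso $(TX)^{n+1}\cong TX\times(TX)^n$. Diagram \eqref{diag:monoidal:assoc} (together with MacLane coherence) guarantees that this does not depend on the chosen bracketing.

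Next I would define $\lambda^\Sigma_X$ on each coproduct component of $\poly TX=\coprod_{\sigma\in\Sigma}(TX)^{\ari(\sigma)}$: for $\sigma$ of arity $n$, let
\[
\lambda^\Sigma_{X,\sigma}\colon (TX)^n\xrightarrow{\psi^n_X} T(X^n)\xrightarrow{T\iota_\sigma} T\poly X,
\]
where $\iota_\sigma\colon X^n\hookrightarrow \poly X$ is the coproduct injection, and take $\lambda^\Sigma_X=[\lambda^\Sigma_{X,\sigma}]_{\sigma\in\Sigma}$. Naturality in $X$ follows from naturality of $\psi$ (and of $\eta^T$, for the $n=0$ case) together with naturality of $\iota_\sigma$.

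It remains to check the two axioms \eqref{diag:distlaw:etaT} and \eqref{diag:distlaw:muT} of a $\kl$-law. Both can be verified on each $\sigma$-component separately, since $\poly$ is a coproduct of functors. For \eqref{diag:distlaw:etaT}, the $\sigma$-component of $\poly\eta^T_X$ is $(\eta^T_X)^n$, so one needs $T\iota_\sigma\circ\psi^n_X\circ(\eta^T_X)^n=\eta^T_{\poly X}\circ\iota_\sigma$. This reduces, via naturality of $\eta^T$ applied to $\iota_\sigma$, to the identity $\psi^n_X\circ(\eta^T_X)^n=\eta^T_{X^n}$, which follows by induction on $n$ from \eqref{diag:monoidal:unit} (the base case $n=0$ being $\psi^0=\eta^T_1$ by definition). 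For \eqref{diag:distlaw:muT}, the required equation on the $\sigma$-component reduces, after sliding $T\iota_\sigma$ past $\mu^T$ via naturality of $\mu^T$, to
\[
\mu^T_{X^n}\circ T\psi^n_X\circ\psi^n_{TX}=\psi^n_X\circ(\mu^T_X)^n,
\]
which is again proved by induction on $n$ from \eqref{diag:monoidal:multiplication} together with naturality of $\mu^T$ and $\psi$.

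The main technical obstacle is not conceptual but bookkeeping: organising the iterated Fubini maps $\psi^n$ so that the two inductive arguments go through cleanly. This hinges on the coherence guaranteed by \eqref{diag:monoidal:assoc}, which ensures that any choice of iteration of $\psi$ yields the same map and that the inductive steps rebracket correctly. Symmetry \eqref{diag:monoidal:sym} is not needed here (it will matter only for lifting equations in the next section), so the construction works for every monoidal $\Set$-monad regardless of commutativity concerns.
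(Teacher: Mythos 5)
Your construction is exactly the one the paper uses: the distributive law displayed in \eqref{diag:HSoverT} is precisely your cotupling of the iterated Fubini maps $\psi^{(\ari(\sigma))}$ post-composed with $T$ of the coproduct injections, and your inductive verification of \eqref{diag:distlaw:etaT} and \eqref{diag:distlaw:muT} from \eqref{diag:monoidal:unit} and \eqref{diag:monoidal:multiplication} is the standard completion of the argument the paper cites as well known. The proposal is correct and takes essentially the same approach as the paper.
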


The distributive laws $\lambda\s: \poly T\to T\poly$ built from a monoidal structure $\psi$ on $T$ in Theorem \ref{thm:monoidalDLaw} have the general shape
\begin{equation}\label{diag:HSoverT}
\xymatrix@C=14ex
{
\poly T X=\coprod_{s\in\Sigma} (TX)^{\ari(s)} \ar[r]^-{\coprod_{s\in\Sigma}\psi^{(\ari(s))}_X}  & T \poly X
}
\end{equation}
where $\psi^{(0)}_X=\eta_1^T, \psi^{(1)}_X=\id_X, \psi^{(2)}_X=\psi_{X,X}$. For $k \geq 3$ if we wanted to be completely rigorous we should first give an evaluation order to the $k$-fold monoidal product $(TX)^k$ -- for example evaluating the products from the left, e.g. $(TX)^3:= (TX\otimes TX)\otimes TX$ -- and then define $\psi^{(k)}: (TX)^k\to T(X^k)$ accordingly by repeated application of the Fubini transformation $\psi$ -- for example defining 
\[
\psi^{(3)}_X=\psi_{X\otimes X,X}\circ (\psi_{X,X}\times \id): (TX\otimes TX)\otimes TX \to T((X\otimes X)\otimes X)
\]
However, we will in general be interested in a variety of evaluation orders for the tensors (depending on circumstances), and since in $\Set$ these different evaluation orders are related by a combination of associators $\alpha_{X,Y,Z}$ which simply re-bracket tuples, we will abuse notation slightly and write
\[
\psi^{(k)}_X: (TX)^k\to T(X^k)
\]
with the understanding that $\psi^{(k)}_X$ is only defined up to re-bracketing of tuples which is quietly taking place `under the hood' as called for by the particular situation. The distributive laws defined by Theorem \ref{thm:monoidalDLaw} can be extended to distributive laws for the free monad associated with the signature $\Sigma$.
\begin{proposition}\label{prop:distributionSigma}
Given a finitary signature $\Sigma$ and a monad $T:\Set\to\Set$, there is a one-to-one correspondence between
\begin{enumerate}[(i)]
\item  distributive laws $\lambda^\Sigma: \poly T\to T \poly$ of the polynomial \emph{functor} associated with $\Sigma$ over $T$
\item distributive laws $\rho^\Sigma: \free T\to T \free$ of the free \emph{monad} associated with $\Sigma$ over $T$
\end{enumerate}
\end{proposition}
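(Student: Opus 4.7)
My approach would exploit the universal property of $\free$ as the free monad on the polynomial functor $\poly$, combined with Lemma~\ref{lem:EM_Alg_Equiv}: in particular, $\free TX$ is the free $\poly$-algebra on $TX$, with canonical algebra structure $\alpha_{TX}: \poly\free TX \to \free TX$ being (the image under the comparison functor of) the free $\free$-algebra multiplication.

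\textbf{Direction $(i)\Rightarrow(ii)$.} Given a $\kl$-law $\lambda^\Sigma$, I would equip $T\free X$ with the $\poly$-algebra structure $\tau_X := T\alpha_X \circ \lambda^\Sigma_{\free X}: \poly T\free X \to T\free X$, and define $\rho^\Sigma_X$ as the unique $\poly$-algebra morphism from $(\free TX,\alpha_{TX})$ to $(T\free X,\tau_X)$ extending $T\eta^\free_X: TX \to T\free X$. Naturality of $\rho^\Sigma$ follows from this uniqueness. Of the four distributive-law axioms, \eqref{diag:distlaw:etaS} is immediate from the construction; \eqref{diag:distlaw:muS} follows because both its sides are $\poly$-algebra morphisms out of the free algebra $\free^2 TX$ that agree on the generators $\free TX$, and so must coincide by uniqueness. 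Axioms \eqref{diag:distlaw:etaT} and \eqref{diag:distlaw:muT}, which involve $\eta^T$ and $\mu^T$, would be verified by induction on term structure, invoking the corresponding axioms for $\lambda^\Sigma$ as a $\kl$-law at the operation step, together with naturality of $\psi$ and the compatibility of $\alpha_X$ with $\mu^\free$.

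\textbf{Direction $(ii)\Rightarrow(i)$.} Given $\rho^\Sigma$, I would recover $\lambda^\Sigma$ by restriction along the canonical embedding $\iota := \alpha \circ \poly\eta^\free : \poly \to \free$. Concretely, $\lambda^\Sigma$ would be the unique natural transformation satisfying $T\iota \circ \lambda^\Sigma = \rho^\Sigma \circ \iota T$. Uniqueness uses that $\iota$ is pointwise mono and $T$ preserves injections on non-empty sets; existence requires showing that $\rho^\Sigma$ applied to a depth-one term lies in the image of $T\iota$, which one derives by combining \eqref{diag:distlaw:etaS} (which fixes $\rho^\Sigma$ on variables) with \eqref{diag:distlaw:muS} (which determines $\rho^\Sigma$ on depth-one terms from its values on variables). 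The $\kl$-law axioms for $\lambda^\Sigma$ are then inherited directly from \eqref{diag:distlaw:etaT} and \eqref{diag:distlaw:muT} for $\rho^\Sigma$. The two constructions are mutually inverse by the uniqueness inherent in the universal property used in the forward direction.

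The main obstacle I foresee is the factorization step in $(ii)\Rightarrow(i)$: verifying that $\rho^\Sigma \circ \iota T$ genuinely factors through $T\iota$. A cleaner, more abstract alternative would proceed through Theorem~\ref{thm:distLaws_Liftings} in its Kleisli form: $\kl$-laws $\poly T \to T\poly$ correspond to functor liftings of $\poly$ to $\kl(T)$, and since $\poly$ is finitary the free monad on such a lifting exists in $\kl(T)$ and constitutes a monad lifting of $\free$, which in turn corresponds to a monad distributive law $\free T \to T\free$.
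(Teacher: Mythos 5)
Your forward direction is the standard argument and is essentially correct: $\rho^\Sigma_X$ is the unique $\poly$-algebra morphism out of the free algebra $(\free TX,\alpha_{TX})$ into $(T\free X,\, T\alpha_X\circ\lambda^\Sigma_{\free X})$ extending $T\eta^{\free}_X$; axioms \eqref{diag:distlaw:etaS} and \eqref{diag:distlaw:muS} follow from uniqueness of such morphisms, and \eqref{diag:distlaw:etaT}, \eqref{diag:distlaw:muT} follow by the same uniqueness once one checks that $\eta^T_{\free X}$ and $\mu^T_{\free X}\circ T\rho^\Sigma_X\circ\rho^\Sigma_{TX}$ are $\poly$-algebra morphisms, using the corresponding axioms for $\lambda^\Sigma$. (One quibble: what you need there is naturality of $\lambda^\Sigma$, $\eta^T$ and $\mu^T$, not of $\psi$ --- the proposition concerns an arbitrary monad $T$, not a monoidal one.)

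The gap is exactly where you suspected, and as you have set things up it cannot be repaired. Axiom \eqref{diag:distlaw:muS} does \emph{not} determine $\rho^\Sigma$ on depth-one terms from its values on variables: a depth-one term $\sigma(t_1,\dots,t_n)$ only decomposes under $\mu^{\free}$ into depth-one terms again, so \eqref{diag:distlaw:muS} merely relates $\rho^\Sigma_X$ on depth-one terms over $TX$ to $\rho^\Sigma_{\free X}$ on depth-one terms over $T\free X$ --- there is no base case, and the factorisation of $\rho^\Sigma\circ\iota T$ through $T\iota$ is not forced by the four axioms. It can in fact fail. Take $\Sigma$ a single unary symbol, so $\poly=\Id$ and $\free X=\mathbb{N}\times X$ with $\mu^{\free}$ given by addition, and let $T=\mathbb{N}\times(-)$ be the writer monad of $(\mathbb{N},+,0)$. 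Then $\rho_X(n,(a,x))=(a,(n2^a,x))$ is natural and satisfies all four axioms (it is the distributive law of writer monads encoding the semidirect product in which the second copy of $\mathbb{N}$ acts on the first by $n\mapsto 2^a n$), yet $\rho_X(1,(1,x))=(1,(2,x))$ does not lie in the image of $T\iota_X$, which consists of the elements of term-depth exactly one. Since every $\rho^\Sigma$ produced by your forward construction does satisfy $\rho^\Sigma\circ\iota T=T\iota\circ\lambda^\Sigma$ (apply naturality of $\lambda^\Sigma$ at $\eta^{\free}_X$ together with \eqref{diag:distlaw:etaS}), this $\rho$ is not in its image: the forward map is injective but not surjective onto all of (ii). What your argument actually establishes --- and all that is used downstream in Theorem~\ref{thm:puttingEverythingTogether} --- is that each $\lambda^\Sigma$ extends to a \emph{unique} $\rho^\Sigma$ compatible with $\iota\colon\poly\to\free$, i.e.\ a bijection between (i) and those laws in (ii) satisfying $\rho^\Sigma\circ\iota T=T\iota\circ\lambda^\Sigma$. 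The same caveat applies to your Kleisli alternative: the free-monad construction in $\kl(T)$ yields the forward map, but gives no reason why every monad extension of $\free$ to $\kl(T)$ should arise freely from an extension of $\poly$.
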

In particular, by Theorem \ref{thm:distLaws_Liftings}, the distributive law \eqref{diag:HSoverT} also corresponds to a lifting $\widehat{T}$ of $T$ to $\emo(\free)\simeq \Alg[\Sigma]$. Explicitly, given an $\free$-algebra $\beta: \free X\to X$, $\widehat{T}(X,\beta)$ is defined as the $\free$-algebra 
\begin{equation}\label{diag:liftDef}
\xymatrix
{
\free TX\ar[r]^{\rho\s_X} & T\free X\ar[r]^{T\beta} & TX
}
\end{equation}
Thus whenever $T$ is monoidal, we can `lift' the operations of $\Sigma$, or, in programming language terms, we can define the operations of the outer layer ($T$) on the language defined by the operations of the inner layer ($\free$).
\subsection{Lifting equations}
We now show how to go from a lifting of $T$ on $\emo(\free)\simeq\Alg[\Sigma]$ to a lifting of $T$ on $\emo(S)\simeq\Alg$. More precisely, we will now show how to `quotient' the distributive law $\rho\s\colon \free T\to T\free $ into a distributive law $\lambda: ST\to TS$. Of course this is not always possible, but in the next section we will give sufficient conditions under which the procedure described below does work.  The first step is to define the natural transformation $q\colon \free \epi S$
which quotients the free $\Sigma$-algebras by the equations of $E$ to build the free $(\Sigma,E)$-algebra. At each set $X$, let $EX$ denote the set of pairs $(s,t)\in \free X$ such that $SX\models s=t$ and let $\pi_1,\pi_2$ be the obvious projections. Then $q$ can be constructed via the coequalizers:
\begin{equation}\label{diag:qDef}
\xymatrix
{
EX \ar@<3pt>[r]^{\pi_1} \ar@<-3pt>[r]_{\pi_2} & \free X\ar@{->>}[r]^{q_{X}}  & SX
}
\end{equation}
By construction $q$ is a component-wise regular epi monad morphism ($q\circ \eta=\eta^S$ and $\mu^S\circ qq=q\circ \mu^T)$, and it induces a functor $Q:\EM(S)\to \EM(\free)$ defined by 
\[
Q(\xi\colon SX\to X)=\xi\circ q_X: \free X\to X, \hspace{3em}Q(f)=f
\]
which is well defined by naturality of $q$. This functor describes an embedding, in particular it is injective on objects: if $Q(\xi_1)=Q(\xi_2)$ then $\xi_1\circ q_X=\xi_2\circ q_X$, and therefore $\xi_1=\xi_2$ since $q_X$ is a (regular) epi. 

Given two terms $u,v\in \free V$, we will say that a lifting $\widehat{T}: \Alg[\Sigma]\to\Alg[\Sigma]$ preserves the equation $u=v$, or by a slight abuse of notation that the monad $T$ preserves $u=v$, if $\widehat{T}\mathcal{A}\models u=v$ whenever $\mathcal{A}\models u=v$. Similarly, we will say that $\widehat{T}$ sends $(\Sigma,E)$-algebras to $(\Sigma,E)$-algebras if it preserves all the equations in $E$.  Half of the following result can be found in \cite{bonsangue2013presenting} where a distributive law over a \emph{functor} is built in a similar way.

\begin{lemma}\label{thm:quotient}
If $q\colon \free\epi T$ is a component-wise epi monad morphism, $\rho\s$ is a distributive law of the monad $\free$ over the monad $T$ and if there exists a natural transformation $\lambda\colon ST\to TS$ such that the following diagram commutes
\begin{equation}\label{diag:quotient}
\xymatrix@R=4ex
{
\free T\ar@{->>}[r]^{qT}\ar[d]_{\rho\s} & ST\ar[d]^{\lambda}\\
T\free \ar[r]_{Tq} & TS
}
\end{equation} 
then $\lambda$ is a distributive law of the monad $S$ over the monad $T$.
\end{lemma}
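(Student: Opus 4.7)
The plan is to verify the four distributive-law axioms \eqref{diag:distlaw:etaT}--\eqref{diag:distlaw:muS} for $\lambda$. In each case the strategy is the same: since $q$ is component-wise epi (and all $\Set$-endofunctors preserve surjections, so whiskering $q$ with functors again yields component-wise epis), it suffices to verify the axiom after precomposing with a suitable epi built from $q$. We then reduce the resulting equation to the corresponding axiom for the distributive law $\rho^\Sigma\colon \free T\to T\free$ by using \eqref{diag:quotient} together with the fact that $q$ is a monad morphism ($q\circ \eta^{\free}=\eta^S$ and $\mu^S\circ qq=q\circ \mu^{\free}$).

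For the two unit axioms this is immediate. For \eqref{diag:distlaw:etaS}, write $\eta^S T = qT\circ \eta^{\free}T$ and compute
$\lambda\circ \eta^S T = \lambda \circ qT\circ \eta^{\free}T = Tq\circ \rho^\Sigma\circ \eta^{\free}T = Tq\circ \eta^T\free = \eta^T\circ q = \eta^T S$,
using \eqref{diag:quotient} and the corresponding axiom of $\rho^\Sigma$. For \eqref{diag:distlaw:etaT}, precompose with $q$ and chase similarly: $\lambda\circ S\eta^T\circ q = \lambda \circ qT\circ \free\eta^T = Tq\circ \rho^\Sigma\circ \free\eta^T = Tq\circ \eta^T\free = \eta^T S\circ q$, then cancel $q$.

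For \eqref{diag:distlaw:muT}, precompose both sides with $qTT$. On the left, naturality of $q$ and \eqref{diag:quotient} turn $\lambda \circ S\mu^T\circ qTT$ into $Tq\circ \rho^\Sigma\circ \free\mu^T$, which by the $\mu^T$-axiom of $\rho^\Sigma$ equals $Tq\circ \mu^T\free\circ T\rho^\Sigma\circ \rho^\Sigma T$. On the right, two applications of \eqref{diag:quotient} (one of them under $T$) together with naturality of $\mu^T$ produce the same expression, and cancellation of the epi $qTT$ gives the axiom.

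The main obstacle is \eqref{diag:distlaw:muS}, where we must also translate the multiplication of $S$ through $q$. Precompose with $qqT = SqT\circ q\free T$, which is epi. The left-hand side becomes
\[
\lambda\circ \mu^S T\circ qqT \;=\; \lambda\circ qT\circ \mu^{\free}T \;=\; Tq\circ \rho^\Sigma\circ \mu^{\free}T,
\]
using the monad-morphism identity and \eqref{diag:quotient}. Applying the $\mu^{\free}$-axiom of $\rho^\Sigma$ and then using the monad-morphism identity a second time (to rewrite $Tq\circ T\mu^{\free}$ as $T\mu^S\circ TSq\circ Tq\free$) together with \eqref{diag:quotient} reshapes this into
\[
T\mu^S\circ TSq\circ \lambda\free\circ qT\free\circ \free\rho^\Sigma .
\]
For the right-hand side, using \eqref{diag:quotient} under $S$ (to rewrite $S\lambda\circ SqT$ as $STq\circ S\rho^\Sigma$) followed by naturality of $\lambda$ at $q_X$ (giving $\lambda S\circ STq = TSq\circ \lambda\free$) yields
\[
T\mu^S\circ TSq\circ \lambda\free\circ S\rho^\Sigma\circ q\free T .
\]
The two expressions coincide by naturality of $q\colon \free\to S$ applied to the components $\rho^\Sigma_X\colon \free TX\to T\free X$, which is precisely the identity $qT\free\circ \free\rho^\Sigma = S\rho^\Sigma\circ q\free T$. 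Cancelling the epi $qqT$ completes the verification. The delicate points throughout are tracking which side each of $S$, $T$, $\free$ acts on and spotting that the final bridging step between the two chains is nothing more than naturality of $q$ at the component $\rho^\Sigma_X$.
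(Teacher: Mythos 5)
Your proof is correct and follows essentially the same route as the paper: each distributive-law axiom for $\lambda$ is obtained from the corresponding axiom for $\rho\s$ by precomposing with a component-wise epi built from $q$, pushing it through \eqref{diag:quotient} and the monad-morphism identities for $q$, and then cancelling the epi, with naturality of $q$ at the components $\rho\s_X$ supplying the final bridge in the \eqref{diag:distlaw:muS} case. The only blemish is a transcription slip in your \eqref{diag:distlaw:etaS} computation, whose last two terms should read $Tq\circ T\eta^{\free}=T\eta^{S}$ rather than $Tq\circ \eta^{T}\free=\eta^{T}S$ (the latter belong to the \eqref{diag:distlaw:etaT} chain and do not typecheck here); this does not affect the method.
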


From this lemma we can give an abstract criterion which, when implemented concretely in the next section, will allow us to go from a lifting of $T$ on $\emo(\free)\simeq\Alg[\Sigma]$ to a lifting of $T$ on $\emo(S)\simeq\Alg$.

\begin{theorem}\label{thm:puttingEverythingTogether}
Suppose $T,S:\Set\to\Set$ are finitary monads, that $T$ is monoidal and that $\emo(S)\simeq\Alg$, and let $\widehat{T}:\Alg[\Sigma]\to\Alg[\Sigma]$ be the unique lifting of $T$ defined via Theorems \ref{thm:distLaws_Liftings},\ref{thm:monoidalDLaw} and Proposition \ref{prop:distributionSigma}. If $\widehat{T}$ sends $(\Sigma,E)$-algebras to $(\Sigma,E)$-algebras, then there exists a natural transformation $\lambda: ST \to TS$ satisfying \eqref{diag:quotient}, and therefore a distributive law of $S$ over $T$.
\end{theorem}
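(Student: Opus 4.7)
The plan is to invoke Lemma \ref{thm:quotient}, which reduces the proof to constructing a natural transformation $\lambda\colon ST\to TS$ making the square \eqref{diag:quotient} commute. Since each component $q_{TX}\colon \free TX\to STX$ is a regular epi in $\Set$ (it is the coequalizer \eqref{diag:qDef} evaluated at the set $TX$), the task reduces to checking that $Tq_X\circ\rho\s_X$ coequalizes the pair $\pi_1,\pi_2\colon E(TX)\rightrightarrows \free TX$, i.e.\ that $Tq_X(\rho\s_X(s))=Tq_X(\rho\s_X(t))$ whenever $S(TX)\models s=t$. Naturality of the resulting $\lambda$ will then follow from the naturality of $q$, $\rho\s$ and $Tq$ together with $qT$ being pointwise epi.

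The crucial step is to re-read $Tq_X\circ\rho\s_X\colon \free TX\to TSX$ as an interpretation-of-terms map into a lifted $(\Sigma,E)$-algebra. Let $\beta_X:=\mu^S_X\circ q_{SX}\colon \free SX\to SX$ denote the $\free$-algebra structure on $SX$ associated with the free $S$-algebra $\mu^S_X$ via Lemma \ref{lem:EM_Alg_Equiv}. First, naturality of $\rho\s$ at $\eta^S_X\colon X\to SX$ yields $\rho\s_{SX}\circ\free(T\eta^S_X)=T\free(\eta^S_X)\circ\rho\s_X$. Second, a quick check gives $\beta_X\circ\free(\eta^S_X)=q_X$: both sides are $\free$-algebra morphisms $(\free X,\mu^\free_X)\to(SX,\beta_X)$ extending $\eta^S_X\colon X\to SX$, so they agree by the universal property of $\free X$ (the right-hand side is such a morphism because $q$ is a monad morphism). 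Combining these identities,
\begin{equation*}
Tq_X\circ\rho\s_X \;=\; T\beta_X \circ T\free(\eta^S_X) \circ \rho\s_X \;=\; T\beta_X \circ \rho\s_{SX} \circ \free(T\eta^S_X),
\end{equation*}
and the right-hand side is precisely the interpretation of $\free TX$-terms in the lifted $\free$-algebra $\widehat{T}(SX,\beta_X)=T\beta_X\circ\rho\s_{SX}$ under the valuation $T\eta^S_X\colon TX\to TSX$.

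The hypothesis then closes the argument. Because $\widehat{T}$ sends $(\Sigma,E)$-algebras to $(\Sigma,E)$-algebras, $\widehat{T}(SX,\beta_X)$ validates every equation in $E$ under every valuation. Any $(s,t)\in E(TX)$ is by definition an equational consequence of $E$ over the variable set $TX$, so $\sem{s}_{T\eta^S_X}=\sem{t}_{T\eta^S_X}$ in $\widehat{T}(SX,\beta_X)$, which by the previous display unfolds to exactly $Tq_X(\rho\s_X(s))=Tq_X(\rho\s_X(t))$. The main obstacle in the whole proof is the bookkeeping in the middle paragraph, where one has to keep track of the various $\free$-algebra structures in play (on $\free X$, $SX$, $TSX$) and verify the auxiliary identity $\beta_X\circ\free(\eta^S_X)=q_X$; once this is in place, the coequalizer universal property produces $\lambda$ and Lemma \ref{thm:quotient} upgrades it to a distributive law of $S$ over $T$.
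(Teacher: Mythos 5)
Your proposal is correct and follows essentially the route the paper intends: you induce $\lambda_X$ from the coequalizer presentation \eqref{diag:qDef} of $q_{TX}$ by showing that $Tq_X\circ\rho^\Sigma_X$ identifies $E$-equivalent terms, which you do by recognising it as term interpretation in the lifted free algebra $\widehat{T}(SX,\beta_X)$ (a $(\Sigma,E)$-algebra by hypothesis) under the valuation $T\eta^S_X$, and then you hand the resulting natural transformation to Lemma \ref{thm:quotient}. The auxiliary identities $\beta_X\circ\free(\eta^S_X)=q_X$ and the naturality square for $\rho^\Sigma$ at $\eta^S_X$ are verified correctly, so there is nothing to add.
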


%


\section{Checking equation preservation}
\label{sec:diagres}
In Section \ref{sec:distlaw1} we showed how to build a lifting of $T\colon\Set\to\Set$ to $\widehat{T}:\Alg[\Sigma]\to\Alg[\Sigma]$ using a Fubini transformation $\te$ via \eqref{diag:HSoverT} and \eqref{diag:liftDef}. In this section we provide a sound method to ascertain whether this lifting sends $(\Sigma,E)$-algebras to $(\Sigma,E)$-algebras, by giving sufficient conditions for the preservation of equations. We assume throughout this section that $T$ is monoidal, in particular $T$ lifts to $\Alg[\Sigma]$ for any finitary signature $\Sigma$. We will denote by $\forg:\Alg[\Sigma]\to\Set$ the obvious forgetful functor.

\subsection{Residual diagrams}
We fix a finitary signature $\Sigma$ and let $u,v$ be $\Sigma$-terms over a set of variables $V$. Recall that the monad $T$ preserves the equation $u=v$ if $\widehat{T}\algb\models u=v$ whenever $\algb\models u=v$. If $t$ is a $\Sigma$-term, we will denote by $Var(t)$ the \emph{set of variables} in $t$ and by $Arg(t)$  the \emph{list of arguments} used in $t$ ordered as they appear in $t$. For example, the list of arguments of $t=f(x_1,g(x_3,x_2),x_1)$ is $Arg(t)=[x_1,x_3,x_2,x_1]$.  

Let $V$ be a set of variables and $\mathcal{A}$ be a $\Sigma$-algebra with carrier $A$, we define the morphism $\prepare(t): A^{|V|} \to A^k$ where $k=|Arg(t)|$ as the following pairing of projections:
\[
\text{if } Arg(t)=[x_{i_1},x_{i_2},x_{i_3}, \dots x_{i_k}] \text{ then } \prepare(t) = \langle \pi_{i_1},\pi_{i_2},\pi_{i_3}, \dots \pi_{i_k}\rangle
\] 
Intuitively, this pairing rearranges, copies and duplicates the variables used in $t$ to match the arguments. Next, we define $\evaluate(t) \colon A^{k} \to A$ inductively by:
\begin{align*}
	\evaluate(x) &
		= \id_A \\
	\evaluate(f(t_1, \ldots, t_{i})) &
		= A^{k} \xrightarrow{\evaluate(t_1) \times \ldots \times \evaluate(t_i)} A^{i} \xrightarrow{f_\mathcal{A}} A
\end{align*}
With $f_\mathcal{A}$ the interpretation of $f \in \Sigma$ in $\mathcal{A}$. Finally we define $\sem{t}_{\algb}^V$~as~$\evaluate(t) \circ \prepare(t)$. The following lemma follows easily from the definitions.
\begin{lemma}
For any $t\in\free V$, $\prepare(t), \evaluate(t)$, and thus $\sem{t}_{\algb}^V$, are natural in $\mathcal{A}$.
\end{lemma}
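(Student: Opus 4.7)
The plan is to prove the three naturality statements in sequence, starting with the one that requires no algebra structure, then using structural induction to handle the interpretation map, and finally composing.

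First I would observe that $\prepare(t)$ is defined purely in terms of projections $\pi_{i_j}: A^{|V|}\to A$ and their pairing, and does not depend on the $\Sigma$-algebra structure at all. Hence for any $\Sigma$-algebra morphism $f\colon \mathcal{A}\to\mathcal{B}$ with underlying map $f\colon A\to B$, naturality of projections gives $f\circ \pi_{i_j}=\pi_{i_j}\circ f^{|V|}$, and the universal property of pairing then yields $f^{k}\circ \prepare(t)=\prepare(t)\circ f^{|V|}$. So $\prepare(t)$ is natural in $\mathcal{A}$ (in fact along the forgetful functor $\forg:\Alg[\Sigma]\to\Set$).

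Next I would prove naturality of $\evaluate(t)$ by structural induction on $t\in \free V$. The base case $t=x$ gives $\evaluate(x)=\id$, which is trivially natural. For the inductive step $t=f(t_1,\dots,t_i)$ with $f\in\Sigma$, assume by induction hypothesis that each $\evaluate(t_j)$ is natural. The product $\evaluate(t_1)\times\cdots\times\evaluate(t_i)$ is then natural as a product of natural transformations. The key step, and really the only nontrivial observation, is that $\Sigma$-algebra morphisms commute with operation interpretations by definition: $f\circ f_\mathcal{A}=f_\mathcal{B}\circ f^{i}$. Concatenating this square with the induction hypothesis square gives the commuting diagram
\begin{equation*}
\xymatrix@C=7ex{
A^{k}\ar[r]^-{\evaluate[\algb](t_1)\times\cdots\times\evaluate[\algb](t_i)}\ar[d]_{f^k} & A^{i}\ar[r]^-{f_\mathcal{A}}\ar[d]_{f^i} & A\ar[d]^{f}\\
B^{k}\ar[r]_-{\evaluate[\algb[B]](t_1)\times\cdots\times\evaluate[\algb[B]](t_i)} & B^{i}\ar[r]_-{f_\mathcal{B}} & B
}
\end{equation*}
which is precisely naturality of $\evaluate(t)$.

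Finally, $\sem{t}_{\algb}^V=\evaluate(t)\circ\prepare(t)$ is a composition of natural transformations and hence natural in $\mathcal{A}$. I do not expect any real obstacle here; the whole content of the lemma is that the definition of a $\Sigma$-algebra morphism is exactly what makes the inductive step go through, so the proof is essentially a two-line structural induction plus a remark for $\prepare$.
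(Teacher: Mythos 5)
Your proof is correct and is exactly the routine verification the paper has in mind (the paper simply asserts that the lemma ``follows easily from the definitions''): naturality of $\prepare(t)$ is immediate from its definition as a tuple of projections, and naturality of $\evaluate(t)$ is a structural induction whose inductive step is precisely the defining square of a $\Sigma$-algebra morphism, with the composite $\sem{t}_{\algb}^V$ then natural as a composite of natural transformations. The only blemish is notational: you use $f$ both for the operation symbol in $\Sigma$ and for the algebra morphism, so the key identity would be clearer written as $h\circ f_{\mathcal{A}}=f_{\mathcal{B}}\circ h^{i}$ for a $\Sigma$-algebra morphism $h\colon\mathcal{A}\to\mathcal{B}$.
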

We can therefore re-interpret any term $t\in \free V$ as a natural transformation $\sem{t}^V: (-)^{(|V|)}\forg \to \forg$ which is itself the composition of two natural transformations. The first one, $\prepare[](t): (-)^{|V|}\forg \to (-)^k\forg $, `prepares' the variables by swapping, copying and deleting them as appropriate. The second one, $\evaluate[](t):(-)^k\forg \to \forg$, performs the evaluation at each given algebra. Of course, the usual soundness and completeness property of term functions still holds.
\begin{lemma}\label{lem:soundcompl}
For $\algb$ a $\Sigma$-algebra and $u,v\in\free V$, $\sem{u}_{\algb}^V = \sem{v}_{\algb}^V$ iff $\algb\models u=v$.
\end{lemma}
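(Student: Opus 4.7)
The plan is to reduce the semantic equivalence $\algb \models u = v$ to the pointwise equality of the two functions $\sem{u}_\algb^V, \sem{v}_\algb^V : A^{|V|} \to A$ by proving a bridging claim: for every valuation $\phi : V \to A$, identified with the tuple $\bar a \in A^{|V|}$ obtained from a fixed enumeration $V = \{x_1,\ldots, x_n\}$ via $a_j = \phi(x_j)$, one has
\[
\sem{u}_\algb^V(\bar a) = \sem{u}_\phi ,
\]
where the right-hand side is the recursive term interpretation from the preliminaries. Since two functions $A^{|V|} \to A$ are equal if and only if they agree on every input, and since inputs are in bijection with valuations under the fixed enumeration, the lemma follows at once from this claim: $\sem{u}_\algb^V = \sem{v}_\algb^V$ iff $\sem{u}_\phi = \sem{v}_\phi$ for all $\phi$, iff $\algb \models u = v$.

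I will establish the bridging claim by structural induction on $u$. In the base case $u = x_j$ for some variable, one has $Arg(u) = [x_j]$ and hence $\prepare(u) = \pi_j$, so $\prepare(u)(\bar a) = a_j = \phi(x_j)$; combined with $\evaluate(u) = \id_A$, this yields $\sem{u}_\algb^V(\bar a) = \phi(x_j) = \sem{x_j}_\phi$, as required.

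For the inductive step $u = f(t_1, \ldots, t_i)$, the crucial combinatorial observation is that the argument list decomposes as the concatenation of $Arg(t_1),\ldots, Arg(t_i)$, so the tuple $\prepare(u)(\bar a) \in A^k$ splits as the concatenation of the sub-tuples $\prepare(t_j)(\bar a) \in A^{|Arg(t_j)|}$. By definition $\evaluate(u) = f_\algb \circ (\evaluate(t_1) \times \cdots \times \evaluate(t_i))$, and applying this to the concatenated tuple yields
\[
f_\algb\bigl(\evaluate(t_1)(\prepare(t_1)(\bar a)), \ldots, \evaluate(t_i)(\prepare(t_i)(\bar a))\bigr),
\]
which, by the induction hypothesis applied to each $t_j$, equals $f_\algb(\sem{t_1}_\phi, \ldots, \sem{t_i}_\phi) = \sem{u}_\phi$.

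The only slightly delicate point is the bookkeeping in the inductive step: one must verify that the product map $\evaluate(t_1) \times \cdots \times \evaluate(t_i)$ genuinely acts on a concatenated tuple by slicing it into the correct blocks of sizes $|Arg(t_j)|$ and applying each $\evaluate(t_j)$ to its block, and that this slicing aligns with the way $\prepare$ picks out the argument positions. Both facts follow directly from the definitions of $Arg$ and $\prepare$, so no substantive obstacle arises; once the bridging claim is isolated the proof is a routine induction.
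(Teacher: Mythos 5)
Your proof is correct and is exactly the standard argument the paper leaves implicit (it dismisses the lemma as "the usual soundness and completeness property of term functions"): a structural induction establishing that $\sem{u}_{\algb}^V$ applied to the tuple encoding a valuation $\phi$ equals the recursive interpretation $\lsem u\rsem_\phi$, after which the equivalence with $\algb\models u=v$ is immediate. The bookkeeping you flag --- that $Arg(f(t_1,\ldots,t_i))$ is the concatenation of the $Arg(t_j)$ and that $\prepare$ and the product map $\evaluate(t_1)\times\cdots\times\evaluate(t_i)$ slice the tuple into matching blocks --- is indeed the only point requiring care, and you handle it correctly.
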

Now consider the following diagram:
\begin{equation}\label{diag:preserves}
\xymatrix@C=12ex
{
(-)^{(|V|)}\forg \widehat{T}\ar@/^2pc/[rr]^{\sem{t}^V_{\widehat{T}}} \ar[r]^{\prepare[\widehat{T}](t)}\ar[d]_{\te^{|V|}_{\forg}}\ar@{}[rd]|{\ccld{r}} & 
(-)^k\forg\widehat{T} \ar@{}[rd]|{\ccld{q}} \ar[r]^{\evaluate[\widehat{T}](t)}\ar[d]^{\te^{(k)}_{\forg}} & 
\forg\widehat{T}\ar[d]^{\forg \id_{\widehat{T}}}\\
T(-)^{|V|}\forg \ar[r]_{T \prepare[](t)}\ar@/_2pc/[rr]_{T\sem{t}^V} & 
T(-)^k\forg \ar[r]_{T \evaluate[](t)} & 
\forg\widehat{T}
}
\end{equation}
Since $\forg \circ \widehat{T}=T\circ\forg$  by definition of liftings it is clear that the vertical arrows $\psi_{\forg}^{(|V|)}$ and $\psi_{\forg}^{(k)}$ are well-typed. 
We define $Pres(T,t,V)$ as the outer square of Diagram \eqref{diag:preserves}  and we call the left-hand square $\ccld{r}$ the \emph{residual diagram} $\res(T,t,V)$. The following Lemma is at the heart of our method for building distributive laws.

\begin{lemma}\label{lem:residual}
If $\res(T,t,V)$ commutes, then $Pres(T,t,V)$ commutes.
\end{lemma}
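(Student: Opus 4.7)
The strategy is to observe that $Pres(T,t,V)$ is the horizontal pasting of $\ccld{r}$ and $\ccld{q}$ in diagram \eqref{diag:preserves}: if both inner squares commute, so does the outer one. Since the residual square $\ccld{r}$ is assumed to commute, the lemma reduces to showing that the right-hand square $\ccld{q}$, relating $\evaluate[\widehat{T}(X,\beta)](t)$ and $T\evaluate[(X,\beta)](t)$ via the iterated Fubini $\psi^{(k)}_X$, commutes for every $\free$-algebra $(X,\beta)$ and every term $t$. Crucially, this direction should require no extra hypothesis beyond $T$ being monoidal.

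I would establish commutation of $\ccld{q}$ by induction on the structure of $t$. The base case $t=x$ is immediate, since then $k=1$, $\evaluate(x)=\id$, and $\psi^{(1)}=\id$. For the inductive step $t=f(t_1,\ldots,t_i)$, let $k_j=|Arg(t_j)|$ so that $k=k_1+\cdots+k_i$. Unfolding the definition of the lifting via \eqref{diag:HSoverT} and \eqref{diag:liftDef} yields the key identity
\[
f_{\widehat{T}(X,\beta)} \;=\; Tf_{(X,\beta)} \circ \psi^{(i)}_X \colon (TX)^i \to TX.
\]
Combining this with the recursive definition of $\evaluate$ and the inductive hypothesis $\evaluate[\widehat{T}(X,\beta)](t_j) = T\evaluate[(X,\beta)](t_j) \circ \psi^{(k_j)}_X$, one rewrites $\evaluate[\widehat{T}(X,\beta)](t)$ as
\[
Tf_{(X,\beta)} \circ \psi^{(i)}_X \circ \bigl(T\evaluate[(X,\beta)](t_1) \times \cdots \times T\evaluate[(X,\beta)](t_i)\bigr) \circ \bigl(\psi^{(k_1)}_X \times \cdots \times \psi^{(k_i)}_X\bigr).
\]
Applying multi-variable naturality of $\psi^{(i)}$ moves the product of $T\evaluate(t_j)$'s inside $T(-)$, and the monoidal associativity axiom \eqref{diag:monoidal:assoc} then identifies the composite $\psi^{(i)}_{X^{k_1},\ldots,X^{k_i}} \circ (\psi^{(k_1)}_X \times \cdots \times \psi^{(k_i)}_X)$ with $\psi^{(k)}_X$ up to the associator. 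Functoriality of $T$ collapses what remains to $T\evaluate[(X,\beta)](t) \circ \psi^{(k)}_X$, completing the inductive step.

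The main subtlety lies in the bookkeeping of bracketings in the iterated products: the associativity coherence \eqref{diag:monoidal:assoc} is precisely what guarantees that any two bracketings of $\psi^{(k)}_X$ agree under the associator, which is the content of the paper's remark that $\psi^{(k)}$ is ``only defined up to re-bracketing of tuples''. Once this bookkeeping is absorbed, the induction is routine, and pasting the unconditional commutation of $\ccld{q}$ with the assumed commutation of $\res(T,t,V)$ delivers the commutation of $Pres(T,t,V)$, as required.
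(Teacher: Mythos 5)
Your proof is correct and follows exactly the route the paper sets up: Diagram \eqref{diag:preserves} is constructed precisely so that $Pres(T,t,V)$ is the horizontal pasting of $\ccld{r}$ with $\ccld{q}$, and your structural induction showing that $\ccld{q}$ commutes unconditionally for a monoidal $T$ --- via the identity $f_{\widehat{T}\mathcal{A}}=Tf_{\mathcal{A}}\circ\psi^{(\ari(f))}_X$ coming from \eqref{diag:HSoverT} and \eqref{diag:liftDef}, naturality of $\psi$, and the coherence axiom \eqref{diag:monoidal:assoc} to assemble the iterated Fubini maps --- is the intended content of the lemma. Nothing is missing.
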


The following soundness theorem follows immediately from Lemma \ref{lem:residual}.
\begin{theorem}\label{thm:residual}
If $u,v\in\free V$ are such that $\res(T,u,V)$ and $\res(T,v,V)$ commute, then $T$ preserves $u=v$.
\label{resi}
\end{theorem}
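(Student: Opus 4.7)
The plan is to unpack what ``$T$ preserves $u=v$'' really means and then transport the semantic equality on $\algb$ to $\widehat{T}\algb$ via the commutativity of the outer squares $\Pres(T,u,V)$ and $\Pres(T,v,V)$ (which we already get for free from Lemma \ref{lem:residual}, given that both residuals commute by hypothesis).

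First I would fix an arbitrary $\Sigma$-algebra $\algb$ with $\algb\models u=v$, and use Lemma \ref{lem:soundcompl} to rewrite this as $\sem{u}^V_\algb = \sem{v}^V_\algb$ as maps $A^{|V|}\to A$. The goal, again via Lemma \ref{lem:soundcompl}, is to conclude $\sem{u}^V_{\widehat{T}\algb} = \sem{v}^V_{\widehat{T}\algb}$, which will give $\widehat{T}\algb\models u=v$ and hence the preservation statement.

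The key bridge is the outer square of diagram \eqref{diag:preserves}: applying Lemma \ref{lem:residual} to the two residuals $\res(T,u,V)$ and $\res(T,v,V)$ yields that $\Pres(T,u,V)$ and $\Pres(T,v,V)$ commute. Reading off these outer squares (and using $\forg\widehat{T}=T\forg$ by definition of a lifting), we obtain the natural-transformation identities
\[
\sem{u}^V_{\widehat{T}} \;=\; T\sem{u}^V \circ \te^{(|V|)}_{\forg}, \qquad \sem{v}^V_{\widehat{T}} \;=\; T\sem{v}^V \circ \te^{(|V|)}_{\forg}.
\]
Instantiating these at $\algb$ and applying functoriality of $T$ to the assumed equality $\sem{u}^V_\algb=\sem{v}^V_\algb$ (so $T\sem{u}^V_\algb=T\sem{v}^V_\algb$), both composites agree after precomposition with $\te^{(|V|)}_{\forg\algb}$. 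Hence $\sem{u}^V_{\widehat{T}\algb}=\sem{v}^V_{\widehat{T}\algb}$, and another invocation of Lemma \ref{lem:soundcompl} gives $\widehat{T}\algb\models u=v$, as required.

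I do not expect a serious obstacle here: once Lemma \ref{lem:residual} and Lemma \ref{lem:soundcompl} are in hand, the argument is a short diagram chase threaded through the lifting identity $\forg\widehat{T}=T\forg$. The only minor care-point is bookkeeping the identification of $\forg\widehat{T}\algb$ with $T\forg\algb$ so that the right-hand vertical $\forg\id_{\widehat{T}}$ in \eqref{diag:preserves} is literally the identity, making the outer square the desired equation between maps $(T A)^{|V|}\to T A$ rather than a square up to an implicit coherence.
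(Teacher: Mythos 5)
Your proof is correct and follows essentially the same route as the paper's: soundness/completeness (Lemma \ref{lem:soundcompl}) to translate validity into equality of term functions, Lemma \ref{lem:residual} to get commutativity of the outer squares, apply $T$ to $\sem{u}^V_{\algb}=\sem{v}^V_{\algb}$ and precompose with $\te^{(|V|)}_{A}$, then Lemma \ref{lem:soundcompl} again. No differences worth noting.
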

\begin{proof}
If $\algb\models u=v$, then $\sem{u}^V_{\algb}=\sem{v}^V_{\algb}$ by Lemma \ref{lem:soundcompl} and thus $T\sem{u}^V_{\algb}\circ \psi_{A}^{(|V|)}=T\sem{v}^V_{\algb}\circ \psi_{A}^{(|V|)}$. Since $\res(T,u,V)$ and $\res(T,v,V)$ commute, so do $Pres(T,u,V)$ and $Pres(T,v,V)$ by Lemma \ref{lem:residual}, and therefore $\sem{u}_{\widehat{T}\algb}^V=\sem{v}_{\widehat{T}\algb}^V$, that is to say  $\widehat{T}\algb\models u=v$ by Lemma \ref{lem:soundcompl}.
\end{proof}

Therefore residual diagrams act as sufficient conditions for equation preservation. Note that these diagrams only involve  $\te$, projections and the monad $T$, sometimes inside pairings. In other words, the actual operations of $\Sigma$ appearing in an equation have no impact on its preservation. What matters is the variable rearrangement transformations $\prepare[](u)$ and $\prepare[](v)$, and how they interact with the Fubini transformation $\te$.

The converse of Theorem \ref{thm:residual} does not hold. Consider the powerset monad $\po$ and a $\Sigma$-algebra $\mathcal{A}$ with $\Sigma$ containing a binary operation $\bullet$. Clearly $\widehat{\po}\mathcal{A}\models x \bullet x = x \bullet x$ whenever $\mathcal{A}\models x \bullet x = x \bullet x$, because the equation trivially holds in any $\Sigma$-algebra. In other words, it is preserved by $\po$. However $\res(\po, x \bullet x, \{x\})$ does not commute: provided that $X$ has more than one element, it is easy to see that $\res(\po, x \bullet x, \{x\})$ evaluated at $X$ is
\[
\xymatrix@R=3ex@C=8ex
{
\po A\ar[d]_{\id_{\po A}}\ar[r]^{\Delta_{\po A}} & (\po A)^2\ar[d]^{-\times-}\\
\po A\ar[r]_{\po(\Delta_A)} & \po (A^2)
}
\]
where $\Delta$ is the diagonal transformation and $-\times-$ is the monoidal structure for $\po$ which takes the Cartesian product. This diagram does not commute (in other words $\po$ is not `relevant', see below).


\subsection{Examples of residual diagrams}

We need \emph{a priori} two diagrams per equation to verify preservation. However, in many cases diagrams will be trivially commuting. For instance,  associativity and unit produce trivial diagrams. For associativity we assume a binary operation $\bullet\in\Sigma$, let $V=\{x,y,z\}$ and compute that $\prepare(x \bullet (y \bullet z))=\langle \pi_1,\pi_2,\pi_3 \rangle :~A^3 \to~A^3$ which is just $\id_{A^3}$. It follows that $\res(T,x \bullet (y \bullet z),V)$ commutes since $\te^3 \circ \id_{TA^3} = T\id_{A^3} \circ \te^3$ which trivially holds. The argument for $(x\bullet y)\bullet z$ is identical, thus associativity is \emph{always} lifted. The same argument shows that units are always lifted as well. This is not completely surprising since we have built-in units and associativity via Diagrams \eqref{diag:monoidal:left_unit}, \eqref{diag:monoidal:right_unit} and \eqref{diag:monoidal:assoc}.

Let us now consider commutativity: $x \bullet y = y \bullet x$. In this case, we put $V=\{x,y\}$ and hence $\prepare(x \bullet y) = \id_\mathcal{A}$ and $\res(T,x \bullet y,V)$ obviously commutes for the same reason as before. Similarly, it is not hard to check that $\res(T,y \bullet x,V)$ is just diagram \eqref{diag:monoidal:sym}, which we know holds by our assumption that $T$ is monoidal and Theorem \ref{thm:symMon}. It follows that:
\begin{theorem}
Monoidal monads preserve associativity, unit and commutativity.
\label{thm:auc}
\end{theorem}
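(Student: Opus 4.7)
The plan is to apply Theorem \ref{thm:residual}: for each of the three equation schemas it suffices to show that the associated residual diagrams $\res(T, u, V)$ and $\res(T, v, V)$ commute. The key observation is that the commutativity of a residual diagram depends only on how the variables are rearranged by $\prepare[](-)$, so I will analyze the three cases by computing these rearrangement morphisms explicitly and showing that they either reduce to identities or to the diagram \eqref{diag:monoidal:sym}.

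First I would dispose of associativity. Taking a binary operation $\bullet \in \Sigma$ and $V = \{x,y,z\}$, I compute $\prepare(x \bullet (y \bullet z)) = \langle \pi_1, \pi_2, \pi_3 \rangle = \id_{A^3}$ and likewise $\prepare((x \bullet y) \bullet z) = \id_{A^3}$. In each case the residual diagram reduces to $\psi^{(3)}_\forg \circ \id = \id \circ \psi^{(3)}_\forg$, which commutes trivially. By Theorem \ref{thm:residual}, $T$ preserves associativity. For units, a similar computation applies: taking $\bullet$ binary with a constant $e$, the term $x \bullet e$ (and symmetrically $e \bullet x$) gives a $\prepare$ that is (up to reassociation of tuples) a projection, and the residual diagram is exactly one of the unitor diagrams \eqref{diag:monoidal:left_unit}, \eqref{diag:monoidal:right_unit}, which commutes because $T$ is monoidal. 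Hence units are preserved.

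For commutativity, let $V = \{x,y\}$. On the left-hand side $x \bullet y$ we have $\prepare(x \bullet y) = \langle \pi_1, \pi_2 \rangle = \id_{A^2}$, so $\res(T, x \bullet y, V)$ commutes trivially as above. On the right-hand side, $\prepare(y \bullet x) = \langle \pi_2, \pi_1 \rangle = \swp_{A,A}$, so $\res(T, y \bullet x, V)$ instantiates precisely to the diagram
\[
\xymatrix@C=8ex{
TX \otimes TY \ar[r]^{\psi_{X,Y}} \ar[d]_{\swp_{TX,TY}} & T(X \otimes Y) \ar[d]^{T \swp_{X,Y}} \\
TY \otimes TX \ar[r]_{\psi_{Y,X}} & T(Y \otimes X)
}
\]
which is exactly \eqref{diag:monoidal:sym}. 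By Theorem \ref{thm:symMon}, any monoidal monad on $(\Set,\times,1)$ is symmetric, so this diagram commutes. Applying Theorem \ref{thm:residual} again yields that $T$ preserves commutativity.

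There is no real obstacle here; the whole point of the residual-diagram machinery is that the structural axioms \eqref{diag:monoidal:left_unit}--\eqref{diag:monoidal:assoc} together with \eqref{diag:monoidal:sym} were tailored so that these three equation schemas are absorbed automatically. The only subtlety to flag is the implicit reassociation of tensor products in $\psi^{(k)}$ (as discussed after Theorem \ref{thm:monoidalDLaw}), which is handled by \eqref{diag:monoidal:assoc}; this is why the ternary case for associativity really does collapse to an identity coherence rather than requiring a separate check.
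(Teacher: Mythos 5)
Your proof is correct and follows essentially the same route as the paper: compute $\prepare[](t)$ for each side of each equation, observe that the residual diagrams collapse to identities (associativity, units, and one side of commutativity) or to \eqref{diag:monoidal:sym} (the swapped side of commutativity), and conclude via Theorem \ref{thm:residual} together with Theorem \ref{thm:symMon}. One small quibble: for a unit law such as $x\bullet e = x$ the constant $e$ contributes no entry to $Arg(x\bullet e)$, so $\prepare[](x\bullet e)$ is again an identity and the residual square is trivial rather than being one of the unitor diagrams --- the axioms \eqref{diag:monoidal:left_unit} and \eqref{diag:monoidal:right_unit} enter instead through the evaluation square \ccld{q}, which Lemma \ref{lem:residual} discharges uniformly, exactly as the paper remarks when it says units and associativity are ``built in''.
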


\noindent Some equations are not always preserved by commutative  monads, we present here two important examples.
\begin{equation}\label{ida}
\begin{adjustbox}{max width=\textwidth}
\begin{tabular}{p{7cm} p{5cm}}
Idempotency: $x\bullet x=x$
& Absorption: $x\bullet 0=0$\\
$\res(T,x\bullet x,\{x\})$ given by:
&
$\res(T,x\bullet 0,\{x\}) $ given by:
\\
\xymatrix
{
TA  \ar[d]_{T<\pi_1, \pi_1>}  & TA \ar[l]^{\id} \ar[d]^{<\pi_1, \pi_1>}\\
T(A^2) & (TA)^2 \ar[l]^{\te}\\
}
&
\xymatrix
{
TA  \ar[d]_{T!}  & TA \ar[l]^{\id} \ar[d]^{!}\\
T1 & 1 \ar[l]^{\eta_1}\\
}
\end{tabular}
\end{adjustbox}
\end{equation}

These diagrams correspond to classes of monads studied in the literature.
The residual diagram for idempotency can be expressed as the equation $\te_{A,A} \circ \Delta_{TA} = T \Delta_A$, where $\Delta$ is the diagonal operator. A monad $T$ verifying this condition is called \emph{relevant} by Jacobs in \cite{jacobs1994semantics}. Similarly, one easily shows that the commutativity of the absorption diagram is equivalent to the definition of \emph{affine} monads in \cite{kock1972bilinearity,jacobs1994semantics}.

\subsection{General criteria for equation preservation}

As shown in lemma \ref{lem:residual} and Theorem \ref{thm:residual}, the interaction between $T$ and the variable rearrangements operated by $\prepare[]$ can provide a sufficient condition for the preservation of equations. We will focus on three important types of interaction between a monad $T$ and rearrangement operations. First, the residual diagram for commutativity, i.e. Diagram \eqref{diag:monoidal:sym}, which corresponds to saying that `$T$ preserves variable swapping', i.e. that $T$ is commutative/symmetric monoidal, or in logical terms to the exchange rule. As we have seen, this condition \emph{must} be satisfied in order to simply lift operations, so we must take it as a basic assumption. Second, the residual diagram for idempotency (leftmost diagram of \eqref{ida}) which corresponds to `$T$ preserves variable duplications', i.e. that $T$ is \emph{relevant}, or in logical terms to the weakening rule. Finally, the residual diagram for absorption  (rightmost diagram of \eqref{ida}) which corresponds to `$T$ allows to drop variables', i.e. $T$ is \emph{affine}, or in logical terms to the contraction rule. To each of these residual diagrams corresponds a syntactically definable class of equations which are automatically preserved by a monad satisfying the residual diagram.

\begin{theorem}\label{thm:preservation:noConditions}
Let $T$ be a commutative monad. If $Var(u)=Var(v)$ and if variables appear exactly once in $u$ and in $v$, then $T$ preserves $u=v$.
\label{thm:sm}
\end{theorem}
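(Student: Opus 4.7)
The plan is to invoke Theorem~\ref{thm:residual} and reduce to showing that both residual diagrams $\res(T,u,V)$ and $\res(T,v,V)$ commute. Because each variable of $V$ appears exactly once in $u$, we have $|Arg(u)| = |V|$, and $\prepare(u)$ is a permutation isomorphism of the $|V|$-fold power rather than involving any duplications or deletions; the same holds for $v$. Hence, writing $n = |V|$, each residual diagram reduces pointwise at a carrier $A$ to a square
\[
\xymatrix@C=10ex{
(TA)^n \ar[r]^{\sigma} \ar[d]_{\psi^{(n)}_A} & (TA)^n \ar[d]^{\psi^{(n)}_A}\\
T(A^n) \ar[r]_{T\sigma} & T(A^n)
}
\]
where $\sigma \in S_n$ is the permutation of factors induced by $\prepare(u)$ (respectively $\prepare(v)$).

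The core step is then to prove the following claim: for every $\sigma \in S_n$, the above square commutes. I would argue this by induction on the length of a decomposition of $\sigma$ as a product of adjacent transpositions $\tau_i = (i\ i{+}1)$, since these generate $S_n$. The identity case is trivial, and pasting of commuting squares handles composition, so it suffices to verify the diagram for $\sigma = \tau_i$. For this one chooses a bracketing of the $n$-fold Fubini $\psi^{(n)}_A$ in which the $i$-th and $(i{+}1)$-th factors are paired first via $\psi_{A,A}$ and then merged with the remaining factors. By naturality of $\psi$ and the associativity coherence \eqref{diag:monoidal:assoc} this re-bracketing computes the same morphism. With this bracketing the square factors as the swap of the two paired factors tensored with identities on the others, and therefore commutes by the symmetric-monoidal axiom \eqref{diag:monoidal:sym} applied at those two factors. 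Since $T$ is a commutative monad on $\Set$, Theorem~\ref{thm:symMon} supplies exactly this axiom.

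Combining these observations, $\res(T,u,V)$ and $\res(T,v,V)$ both commute, and Theorem~\ref{thm:residual} yields that $T$ preserves $u=v$. The main obstacle I anticipate is the implicit rebracketing of $\psi^{(n)}$: one must check that the concrete composite produced by the recursive definition of $\prepare$ and $\evaluate$ agrees, up to associators, with a canonical $n$-fold Fubini regardless of how the nested tensor products are parenthesised. This is essentially a Mac Lane coherence argument, but it is already fully accounted for by the monoidal-monad axioms \eqref{diag:monoidal:left_unit}--\eqref{diag:monoidal:assoc} together with \eqref{diag:monoidal:sym}, so no new work is required beyond bookkeeping.
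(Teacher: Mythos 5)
Your proof is correct and follows essentially the same route the paper intends: the paper gives no explicit proof of this theorem (it notes the result appears in Manes and Mulry's work on linear equations), but its residual-diagram machinery and the worked example for commutativity make clear that the intended argument is exactly yours, namely that under the linearity hypothesis $\prepare(u)$ and $\prepare(v)$ are permutations of factors, and the resulting squares commute by decomposing into adjacent transpositions and applying \eqref{diag:monoidal:sym} together with naturality of $\psi$ and the coherence/rebracketing conventions already built into $\psi^{(n)}$. No gaps.
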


Note that this theorem can be found in \cite{manes2007monad}, where this type of equation is called \emph{linear}. Moreover, $\po$ is within the scope of this result, which generalises one direction of Gautam's theorem (\cite{gautam1957validity}). Let us now present original results by first treating the case where variables may appear several times. 
\begin{theorem}\label{thm:preservation:relevant}
Let $T$ be a commutative relevant monad. If $Var(u)=Var(v)$, then $T$ preserves $u=v$.
\label{smr}
\end{theorem}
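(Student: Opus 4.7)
By Theorem \ref{thm:residual} it suffices to show that $\res(T,u,V)$ and $\res(T,v,V)$ both commute when we take $V = Var(u) = Var(v)$. The plan is to factor the variable-preparation morphism $\prepare(t) : A^{|V|} \to A^k$, for any term $t$ with $Var(t) = V$, into a product of iterated diagonals followed by a permutation, and then check that each factor gives rise to a commuting residual square using, respectively, relevance and symmetry.

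Concretely, let $m_j$ be the number of occurrences of $x_j$ in $Arg(t)$. Because $Var(t) = V$, each $m_j \geq 1$ and $\sum_j m_j = k$, so no projection is dropped by $\prepare(t) = \langle \pi_{i_1},\ldots,\pi_{i_k}\rangle$. We can therefore write
\[
\prepare(t) \;=\; \sigma_t \circ (\Delta^{m_1}_A \times \cdots \times \Delta^{m_{|V|}}_A) \colon A^{|V|} \to A^k,
\]
where $\Delta^{m_j}_A : A \to A^{m_j}$ is the $m_j$-fold diagonal and $\sigma_t$ is the permutation sending the replicated tuple $(x_1,\ldots,x_1,x_2,\ldots,x_{|V|},\ldots)$ to the order prescribed by $Arg(t)$. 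This factorisation splits $\res(T,t,V)$ vertically into a \emph{duplication square} and a \emph{permutation square}, joined by an intermediate Fubini transformation $\psi^{(k)}_A$.

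The permutation square commutes by iterating Diagram \eqref{diag:monoidal:sym}: any permutation decomposes into adjacent transpositions, and $T$ is symmetric monoidal by Theorem \ref{thm:symMon} since it is assumed commutative. The duplication square is where relevance enters: the binary identity $\psi_{A,A} \circ \Delta_{TA} = T\Delta_A$ generalises by induction on $m_j$ to $\psi^{(m_j)}_A \circ \Delta^{m_j}_{TA} = T\Delta^{m_j}_A$, and naturality of $\psi$ together with the multilinear assembly of $\psi^{(k)}$ from binary Fubini transformations pastes these identities into the desired commuting square. Applying the argument to both $u$ and $v$ and invoking Theorem \ref{thm:residual} delivers the conclusion. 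The main subtlety lies in the pasting for the duplication square: one has to keep careful track of how $\psi^{(n)}$, which is only defined up to re-bracketing via the coherence diagram \eqref{diag:monoidal:assoc}, interacts with products of diagonals of differing arities — but this is routine bookkeeping rather than a new conceptual ingredient, and it mirrors the coherence argument already implicit in the definition of the higher Fubini transformations.
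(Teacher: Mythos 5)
Your proposal is correct and follows the same route the paper intends: factor $\prepare[](t)$ into a product of iterated diagonals followed by a permutation, discharge the duplication square via relevance (extended to $m$-fold diagonals by induction and naturality of $\psi$) and the permutation square via symmetry, which is available by Theorem \ref{thm:symMon} since $T$ is commutative, and then conclude with Theorem \ref{thm:residual}. The hypothesis $Var(u)=Var(v)=V$ is used exactly where you use it — to guarantee no projection is dropped, so no affineness is needed — and the re-bracketing bookkeeping you flag is indeed the only remaining routine step.
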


Commutative relevant monads seem to preserve many algebraic laws. However, in the case where both sides of the equation do not contain the same variables, for instance $x \bullet 0 = 0$, Theorem \ref{smr} does not apply. Intuitively, the missing piece is the ability to \emph{drop} some of the variables in $V$.

\begin{theorem}\label{thm:preservation:affine}
Let $T$ be a commutative  affine monad. If variables appear at most once in $u$ and in $v$, then $T$ preserves $u=v$.
\label{thm:affinepres}
\end{theorem}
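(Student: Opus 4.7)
The plan is to apply Theorem \ref{thm:residual}, reducing the task to proving that $\res(T,u,V)$ and $\res(T,v,V)$ both commute. The argument is the same for both terms, so I only treat $u$.

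Since each variable appears at most once in $u$, the list $\mathrm{Arg}(u)=[x_{i_1},\ldots,x_{i_k}]$ consists of pairwise distinct indices, so $k\leq|V|$ and $\prepare[](u)=\langle\pi_{i_1},\ldots,\pi_{i_k}\rangle:A^{|V|}\to A^{k}$ is an injective coordinate rearrangement: it selects $k$ of the $|V|$ coordinates in some order and discards the rest. Such a map factors in $\Set$ as a composite of ``forget the last coordinate'' maps $\pi^{n+1}_{n}:A^{n+1}\to A^{n}$ and adjacent transpositions $\sigma^{n}_{i}:A^{n}\to A^{n}$ swapping two consecutive coordinates. My plan is therefore to prove commutativity of the residual square for each of these elementary building blocks and then paste them together.

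For an adjacent transposition, the residual square collapses, modulo the associator re-bracketings hidden in the notation $\psi^{(n)}$, to a single instance of \eqref{diag:monoidal:sym}, which holds because $T$ is commutative and hence (by Theorem \ref{thm:symMon}) symmetric monoidal. For a forget map, starting from
\[
T\pi^{n+1}_{n}\circ\psi^{(n+1)}_{A}=T\pi^{n+1}_{n}\circ\psi_{A^{n},A}\circ(\psi^{(n)}_{A}\times\id_{TA}),
\]
I factor $\pi^{n+1}_{n}:A^{n}\times A\to A^{n}$ through the right unitor as $\rho_{A^{n}}\circ(\id_{A^{n}}\times\,!_{A})$. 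Naturality of $\psi$ in the second argument moves $T(\id\times\,!_{A})$ outside of $\psi$ as $\id\times T!_{A}$; the affine identity $T!_{A}=\eta_{1}\circ\,!_{TA}$ rewrites this as $\id\times\eta_{1}$ post-composed with $\id\times\,!_{TA}$; and the monoidal left-unit axiom \eqref{diag:monoidal:left_unit} (recalling $\psi^{0}=\eta_{1}$) then delivers $\rho_{TA^{n}}$, yielding the desired identity $T\pi^{n+1}_{n}\circ\psi^{(n+1)}_{A}=\psi^{(n)}_{A}\circ\pi^{n+1}_{n}$.

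Pasting these commuting elementary residual squares along the factorisation of $\prepare[](u)$ yields commutativity of $\res(T,u,V)$, and the same argument works for $v$; Theorem \ref{thm:residual} then concludes that $T$ preserves $u=v$. The main obstacle is the forget-map case: the computation has to be carried out so that $T!_{A}$ appears in exactly the right position for the affine condition to turn it into $\eta_{1}\circ\,!_{TA}$, after which \eqref{diag:monoidal:left_unit} closes the square. The transposition case is then by inspection once an evaluation order for iterated tensors has been fixed.
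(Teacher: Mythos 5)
Your proposal is correct and follows the route the paper's framework is built for: reduce to commutativity of the residual diagrams via Theorem \ref{thm:residual}, factor the injective variable-rearrangement $\prepare[](u)$ into adjacent transpositions and coordinate deletions, and discharge the former with \eqref{diag:monoidal:sym} and the latter with the affine identity $T!_A=\eta_1\circ !_{TA}$ combined with \eqref{diag:monoidal:left_unit} and naturality of $\psi$. The only glossed detail is the re-bracketing of iterated tensors needed to isolate the swapped or deleted factor, which requires \eqref{diag:monoidal:assoc} but is exactly the abuse of notation the paper itself licenses for $\psi^{(k)}$.
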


Combining the results of Theorems \ref{thm:preservation:relevant} and \ref{thm:preservation:affine}, one gets a very economical -- if very strong -- criterion for the preservation of \emph{all} equations. 

\begin{theorem}\label{thm:preservation:everything}
Let $T$ be a commutative, relevant and affine monad. For all $u$ and $v$, $T$ preserves $u=v$.
\end{theorem}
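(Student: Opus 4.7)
The plan is to reduce, via Theorem~\ref{thm:residual}, to showing that the residual diagram $\res(T,t,V)$ commutes for \emph{every} $\Sigma$-term $t\in\free V$. The horizontal arrows of $\res(T,t,V)$ are, at both levels, the projection-tupling $\prepare[](t)=\langle \pi_{i_1},\dots,\pi_{i_k}\rangle\colon A^{|V|}\to A^k$, and the vertical arrows are the Fubini transformations $\te^{(|V|)}$ and $\te^{(k)}$. So the entire question reduces to: does any tupling of projections commute with $\te^{(-)}$ when $T$ is commutative, relevant and affine? Once this is proved uniformly in $t$, the conclusion follows from Theorem~\ref{thm:residual} applied twice, to $u$ and to $v$.

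The key observation is that every projection-tupling $\langle \pi_{i_1},\dots,\pi_{i_k}\rangle\colon A^{n}\to A^{k}$ in $\Set$ decomposes as a finite composite of three elementary classes of maps, corresponding exactly to the three structural rules of logic: (i) swaps $\swp\colon A\times A\to A\times A$ (exchange), (ii) diagonals $\Delta\colon A\to A\times A$ (contraction), and (iii) terminal maps $!\colon A\to 1$ paired with unitors (weakening). Indeed, one may first use projections/terminal maps to discard variables in $V\setminus Arg(t)$, then diagonals to duplicate each remaining variable to its required multiplicity, and finally swaps to permute the result into the order prescribed by $Arg(t)$. For each of these three generators, the associated Fubini square is itself a named residual diagram from Section~\ref{sec:diagres}: swaps give \eqref{diag:monoidal:sym}, which holds because $T$ is (symmetric) monoidal; diagonals give the idempotency residual of \eqref{ida}, which holds because $T$ is relevant; and terminal maps give the absorption residual of \eqref{ida}, which holds because $T$ is affine.

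It remains to glue the elementary squares into a square for the composite $\prepare[](t)$. The step I would carry out is a closure lemma: the class of maps $f\colon A^n\to A^m$ for which the square with sides $f$ (top and bottom) and $\te^{(n)},\te^{(m)}$ (left and right) commutes is closed under composition (by pasting squares vertically) and under products with identities (using \eqref{diag:monoidal:assoc} to factor $\te^{(n+m)}$ through $\te^{(n)}\times\te^{(m)}$). Since each of the three generators above lies in this class and $\prepare[](t)$ is built from them by composition and tensoring with identities, its square commutes; this is $\res(T,t,V)$.

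The main obstacle is the bookkeeping around the fact, already flagged after \eqref{diag:HSoverT}, that $\te^{(k)}$ is only defined up to coherent re-bracketing of $k$-fold tensors. When pasting the elementary squares together, a zoo of associators and unitors appears on the vertical sides, and one must check they cancel so that the outer rectangle is genuinely $\res(T,t,V)$ as stated. This is ultimately a routine application of Mac Lane's coherence theorem for the symmetric monoidal category $(\Set,\times,1)$, but it is the only place where real care is required; modulo this coherence argument, the theorem is a direct consequence of Theorems~\ref{thm:residual}, \ref{thm:symMon} and of the defining diagrams of relevance and affineness.
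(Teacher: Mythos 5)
Your proof is correct and takes essentially the approach the paper intends: Theorem~\ref{thm:preservation:everything} is obtained by combining the proof techniques (not merely the statements) of Theorems~\ref{thm:preservation:relevant} and~\ref{thm:preservation:affine}, i.e.\ by decomposing $\prepare[](t)$ into swaps, diagonals and deletions and pasting the corresponding residual squares for symmetry, relevance and affineness, exactly as you do. Your closure lemma (composition and tensoring with identities, via naturality of $\te$ and \eqref{diag:monoidal:assoc}) and the coherence remark are precisely the bookkeeping the paper leaves implicit.
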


Examining the existence of distributive laws between algebraic theories, as well as stating conditions on variable rearrangements, has been studied before in terms of Lawvere Theories (see for instance \cite{cheng2011distributive}). Note that for $T$ commutative monad, being both relevant and affine (sometimes called \emph{cartesian}) is equivalent to preserving products, as seen in \cite{kock1972bilinearity}. This confirms that such a monad $T$ preserves all equations of the underlying algebraic structure, in other words it always has a distributive law with any other monad. This is however a very strong condition. An example of this type of monad is $T(X)=X^Y$ for $Y$ an object of $\Set$.


\subsection{Weakening the inner layer when composition fails.}

In the case where a residual diagram fails to commute, we cannot conclude that the equation lifts from $ \mathcal{A}$ to $\widehat{T} \mathcal{A}$. The non-commutativity of the diagram often provides a counter-example which shows that the equation is in fact not valid in $\widehat{T}  \mathcal{A}$ (this is the case of idempotency and distributivity in the next section).

However, if our aim is to build a structure combining all operations  used to define $T$ and $S$, then our method can provide an answer, since it allows us to identify precisely which equations fail to hold. Let $E'$ be the subset of $E$ containing the equations preserved by $T$. A new monad $S'$ can be derived from signature $\Sigma$ and equations $E'$ using an adjunction of type \eqref{diag:adjunction}. Since $E'$ only contains equations preserved by $T$, by theorem \ref{thm:puttingEverythingTogether} the composition $TS'$ creates a monad, and its algebraic structure contains all the constructs derived from the original signature $\Sigma$, as well as the new symbols arising from $T$.

This method for fixing a faulty monad composition follows the idea of loosening the constraints of the \emph{inner} layer, meaning in this case modifying $S$ to construct a monad resembling $TS$. The best approximate language we obtain has the desired signature, but has lost some of the laws described by $S$. We illustrate this method in the following section.

\section{Application}\label{sec:application}
As sketched in the introduction, our method aims to incrementally build an imperative language: starting with sequential composition, we add a layer providing non-deterministic choice, then a layer for probabilistic choice. 

\paragraph{\bf Adding the non-deterministic layer.}
We start with the simple programming language described in the introduction by the signature \eqref{syn1} and equations \eqref{law1} -- or, equivalently, by the monad $(-)^*$ -- and let $\tt A$ be a set of atomic programs. Our minimal language is thus given by $\tt A^*$. Note that the free monoid is not commutative and thus in our method it cannot be used as an outer layer, it has to constitute the core of the language we build. More generally, our method provides a simple heuristic for compositional language building: always start with the non-commutative monad.


We now add non-determinism via the finitary powerset monad $\Pow$, which is simply the free join semi-lattice monad. To build this extension, we want to combine both monads to create a new monad $\po((-)^*)$. As we have shown in Theorem \ref{thm:puttingEverythingTogether}, it suffices to build a lifting of monad $\po$ to $\Mon$, the category of algebras for the signature \eqref{syn1} and equations \eqref{law1}. For this purpose we apply the method given in section $\ref{sec:diagres}$.

The first step is lifting $\po$ to the category of $\{\mathtt{skip}, \scp\}$-algebras, which means lifting the operations of $\tt A^*$ to $\po(\tt A^*)$ using a Fubini map. It is well-known that the powerset monad is commutative, and it follows in particular that there exists a unique symmetric monoidal transformation $\te\colon \Pow\times\Pow\to\Pow(-\times-)$ which is given by the Cartesian product: for $U \in \po (X), V\in \po(Y)$, we take  $ \te_{X,Y}(U, V) = U \times V $. Using this Fubini transformation, we can now define the interpretation in $\po (\tt A^*)$ of $\mathtt{skip}$ and $\scp$ as:
\begin{align*}
&\widehat{\mathtt{skip}}=\po (\mathtt{skip}) \circ \eta_1(\ast)=\{\varepsilon\}\\
&\hat{\scp}=\Pow(\scp)\circ \te_{\mathtt{A^*,A^*}}\colon (\Pow \mathtt{A^*})^2 \to \Pow A^*,
\quad (U,V)\mapsto \{u \scp v\mid u\in U, v\in V\}
\end{align*}
To check that this lifting defines a lifting on $\Mon$, we need to check that equations \eqref{law1} hold in $\po( \tt A^*)$. These equations describe associativity and unit: by Theorem \ref{thm:auc}, they are always preserved by a strong commutative  monad like $\po$.

It follows from Theorem \ref{thm:puttingEverythingTogether} and \ref{resi} that we obtain a distributive law $\lambda\colon(\po(-))^* \to \po ((-)^*)$ between monads $(-)^*$ and $\po$, hence the composition $\po ((-)^*)$ is also a monad, allowing us to apply our method again and potentially add another monadic layer. The language $\po (\tt A  ^*)$ contains the lifted versions $\widehat{\tt skip}$ and $\hat{;}$ of our previous constructs  as well as the new operations arising from $\po$, namely a non-deterministic choice operation $+$, which is associative, commutative and idempotent, and its unit $\tt abort$. Note that since the monad structure on $\po((-)^*)$ is defined by a distributive law of $(-)^*$ over $\po$, the set of equations $E$ is made of the equations \eqref{law1} arising from $(-)^*$, the equations \eqref{law2} arising from $\po$, and finally the equations \eqref{law3} expressing distributivity of operations of $(-)^*$ over those of $\po$. The language we have built so far has the structure of an \emph{idempotent semiring}.

\paragraph{\bf Adding the probabilistic layer.}
We will now enrich our language further by adding a probabilistic layer. Specifically, we will add the family of probabilistic choice operators $\oplus_\lambda$ for $\lambda\in [0,1]$ satisfying the axioms of convex algebras, i.e.
\begin{equation}\label{eq:convex}
\prog[p \oplus_\lambda p= p] \qquad \prog[p\oplus_\lambda q=q\oplus_{1-\lambda}p] \qquad
\prog[p\oplus_\lambda(q\oplus_{\tau} r)= (p\oplus_{\hspace{-3pt}\frac{\lambda}{\lambda+(1-\lambda)\tau}}\hspace{-2pt}q)\oplus_{\lambda+(1-\lambda)\tau} r]
\end{equation}
From a monadic perspective, we want to examine the composition of monads $\di(\po((-)^*))$. It is known (see \cite{2003:VaraccaProbability}) that $\di$ does not distribute over $\po$. We will see that our method confirms this result.

\noindent We start by lifting the constants and operations $\{\tt skip, abort, \scp,+\}$ of $\po((-)^*)$ by defining a Fubini map $\te: \di(-)\times \di(-)\to \di(-\times -)$. It is well-known that $\di$ is a commutative monad and that the product of measures defines the Fubini transformation. In the case of finitely supported distributions the product of measures can be expressed simply as follows: given  distributions $\mu\in \di X,\nu\in \di Y$, $\te (\mu, \nu)$ is the distribution on $X\times Y$ defined on singletons $(x,y)\in X\times Y$ by
$
(\te (\mu, \nu))(x,y)=\mu(x)\nu(y) 
$.
Theorem \ref{thm:preservation:noConditions} tells us that associativity, commutativity and unit are preserved by $\di$. It follows that the associativity of both $\scp$ and $\tt +$ is preserved by the lifting operation, and the liftings of $\tt skip$ and $\tt abort$ are their respective units. Furthermore, the lifting of $\tt +$ is commutative. 

We know from Theorem \ref{thm:preservation:relevant} that the idempotency of $\tt +$ will be preserved if $\di$ is relevant. It is easy to see that $\di$ is badly non-relevant: consider the set $X=\{a,b\}, a\neq b$ and any measure $\mu$ on $X$ which assigns non-zero probability to both $a$ and $b$. We have:
\begin{align*}
\te(\Delta_{\di X}(\mu))(a,b)&=(\te (\mu, \mu))(a,b)\\
&=\mu(a)\mu(b)\neq 0 \\
&=\mu(\emptyset)\\ &=\mu\{x\in X\mid \Delta_X(x)=(a,b)\} \\&=\di(\Delta_X)(\mu)(a,b)
\end{align*}
It follows that we \emph{cannot} conclude that the lifting $\widehat{\di}: \Alg[\{\tt skip, abort, \scp,+\}] ~\to\Alg[\{\tt skip, abort, \scp,+\}]$ defined by the product of measures following \eqref{diag:HSoverT} sends idempotent semirings to idempotent semirings, and therefore we cannot conclude that $\di(\po(-)^*)$ is a monad (in fact we know it isn't). It is very telling that idempotency also had to be dropped in the design of the probabilistic network specification language ProbNetKAT (see \cite[Lemma 1]{foster2016probabilistic}) which is very similar to the language we are trying to incrementally build in this Section.

\newcommand{\Mset}{\mathcal{M}}
Requiring that $+$ be idempotent is an algebraic obstacle, so let us now remove it and replace as our inner layer the monad building free idempotent semirings -- that is to say $\po(-)^*$ -- by the monad building free semirings -- that is to say $\Mset(-)^*$, where $\Mset$ is the multiset monad ($\Mset$ can also be described as the free commutative monoid monad). Since we have already checked that the $\di$-liftings of binary operations preserve associativity, units and commutativity, it only remains to check that they preserve the distributivity of $\scp$ over $+$. The equation for distributivity belongs to the syntactic class covered by Theorem \ref{thm:preservation:relevant} since it has the same set of variables on each side (but one of them is duplicated, so we fall outside the scope of Theorems \ref{thm:preservation:noConditions} and \ref{thm:preservation:affine}). 
%
Since we've just shown that $\di$ is not relevant, it follows that we cannot lift the distributivity axioms. So we must weaken our inner layer even further and consider a structure consisting of two monoids, one of which is commutative. Interestingly, the failure of distributivity was also observed in the development of ProbNetKAT (\cite[Lemma 4]{foster2016probabilistic}), and therefore should not come as a surprise. 

Having removed the two distributivity axioms we are left with only the absorption laws to check. In this case the equation has no variable duplication, but has not got the same number of variables on each side of the equation, absorption therefore falls in the scope of Theorem  \ref{thm:preservation:affine}, and we need to check if $\di$ is affine. Since $\di 1\simeq 1$, it is trivial to see that $\eta_1\circ !=\di !$ and hence $\di$ is affine. By Theorem \ref{thm:preservation:affine}, the absorption law is therefore preserved by the probabilistic extension. It follows that the probabilistic layer $\di$ can be composed with the inner layer consisting of the signature $\{\tt abort, skip, \scp,+\}$ and the axioms
\begin{multicols}{2}
\begin{enumerate}[(i)]
\item $\tt p\scp skip=skip\scp p=p$
\item $\tt (p\scp q)\scp r=p\scp (q\scp r)$
\item $\tt p+abort=abort+p=p $
\item $\tt p+q=q+p  $
\item $\tt (p+q)+r=p+(q+r)$
\item $\tt p\scp abort=abort=abort\scp p$
\end{enumerate}
\end{multicols}
\noindent i.e. two monoids, one of them commutative, with the absorption law as the only interaction between the two operations. This structure, combined with the axioms of convex algebras \eqref{eq:convex} and the distributivity axioms
\begin{multicols}{2}
\begin{enumerate}[(Dst i)]
\item $\tt p\scp(q\oplus_\lambda r)=(p\scp q)\oplus_\lambda(p\scp r)$
\item $\tt (q\oplus_\lambda r)\scp p=(q\scp p)\oplus_\lambda(r\scp p)$
\item $\tt p+(q\oplus_\lambda r)=(p+ q)\oplus_\lambda(p+ r)$
\item $\tt (q\oplus_\lambda r)+ p=(q+ p)\oplus_\lambda(r+ p)$
\end{enumerate}
\end{multicols}
\noindent forms the `best approximate language' combining sequential composition, non-deterministic choice and probabilistic choice. Note that the distributive laws above makes good semantic sense, and indeed hold for the semantics of ProbNetKAT. What we have built modularly in this section is essentially the $\ast$-free and test-free fragment of ProbNetKAT.

\section{Discussion and future work.}

We have provided a principled approach to building programming languages by incrementally layering features on the top one another. We believe that our approach is close in spirit to how programming languages are typically constructed, that is to say by an incremental enrichment of the list of features, and to the search for modularity initiated by foundational papers \cite{moggi1991notions} and \cite{liang1996modular}.

Our method has assumed throughout that the monad for the outer layer had to be monoidal/commutative. Our method can in fact be straightforwardly extended to monads satisfying only \eqref{diag:monoidal:unit} and \eqref{diag:monoidal:multiplication}. In practice however, the generality gained in this way is very limited: only a monoidal monad will lift an associative operation with a left and right unit, and given the importance of sequential composition with $\tt{skip}$, the restriction we have placed on our method appears fairly natural and benign.

We must be careful about how layers are composed together: our approach yields distributive interactions between them, but one might want other sorts of interactions. Consider for example the minimal programming language $\tt P^*$ described in Section \ref{sec:intro}, and assume that we now want to add a concurrent composition operation $\parallel$ to this language with the natural axiom $\tt p\parallel skip=p=p\parallel skip$. 
This addition is not as simple as layering described in Section \ref{sec:application}, as the new construct has to interact with the core layer in a whole new way: $\tt skip$ must be the unit of $\parallel$ as well. In such cases our approach is not satisfactory, and two alternative strategies present themselves to us: we can consider `larger' layers, for example the combined theory of sequential composition, $\tt{skip}$ and $\parallel$ described above as a single entity. However, the more complex an inner layer is, the less likely it is that an outer layer with lift it in its entirety. Alternatively, we may want to integrate our technique with Hyland and Power's methods (\cite{hyland2006combining}) and combine some layers with sums and tensors, and others with distributive laws, depending on semantic and algebraic considerations.

A comment about our `approximate language' strategy is also in order. As explained in Section \ref{sec:diagres}, when an equation of the inner layer prevents the existence of a distributive law we choose to remove this equation, i.e. to loosen the inner layer. Another option is in principle possible: we could constrain the outer layer until it becomes compatible with the inner layer. We would obtain in this case a replacement candidate for one of our monads in order to achieve composition. In the case of $\di(\po(-)^*)$ this
would be a particularly unproductive idea since the only elements of  $\di(\po(-)^*)$ which satisfy the residual diagram for idempotency are Dirac deltas, i.e. we would get back the language $\po(-)^*$.

Another obvious avenue of research is to extend our method to programming languages specified by more than just equations. One example is the so-called `exchange law' in concurrency theory given by $\tt (p\parallel r)\scp(q\parallel s)\sqsubseteq(p\scp q)\parallel(r\scp s)$   which involves a native pre-ordering on the collection of programs, i.e. moving from the category of sets to the category of posets. Another example are Kozen's quasi-equations (\cite{1991:KozenCompleteness}) axiomatizing the Kleene star operations, for example $\mathtt{ p\scp x}\leq \mathtt{x}\Rightarrow \mathtt{p^*\scp x}\leq \tt x$. This problem is much more difficult and involves moving away from monads and distributive laws altogether since quasi-varieties are in general not monadic categories.

\newpage
\bibliographystyle{plain}
\bibliography{layersbib}

\newpage

\end{document}